\documentclass[11pt,oneside]{article} 

\usepackage{a4wide}
\usepackage[title]{appendix}

\usepackage{amsmath}
\usepackage{color}
\usepackage{framed}
\usepackage{tikz}
\usepackage{tikz-cd}

\RequirePackage{amsmath}
\RequirePackage{amssymb}
\RequirePackage{amsthm}
\RequirePackage{color}
\RequirePackage{url}
\RequirePackage{mdwlist}
\RequirePackage{arydshln}

\RequirePackage{rotating}

\RequirePackage[all]{xy}
\RequirePackage{graphicx}
\RequirePackage{subcaption}

\usepackage{xcolor}
\usepackage{amsmath,amsfonts,amssymb}
\usepackage{graphicx}
\usepackage{enumerate}
\usepackage{mathrsfs}
\usepackage{mathtools}
\usepackage{authblk}


\makeatletter
\newcommand{\verbatimfont}[1]{\renewcommand{\verbatim@font}{\ttfamily#1}}
\makeatother

\newcommand{\GL}{\mathrm{GL}}
\newcommand{\GU}{\mathrm{GU}}

\newcommand{\Sp}{\mathrm{Sp}}

\newcommand{\U}{\mathrm{U}}

\newcommand{\Field}{\mathbb{F}}
\newcommand{\GF}{\mathrm{GF}}

\newcommand{\tensor}{\otimes}

\usepackage{mathtools}



\newcounter{numitem}

\newcommand{\ket}[1]{|{#1}\rangle}


\newtheorem{lemma}{Lemma}[section]
\newtheorem{proposition}[lemma]{Proposition}
\newtheorem{corollary}[lemma]{Corollary}
\newtheorem{example}[lemma]{Example}
\newtheorem{theorem}[lemma]{Theorem}
\newtheorem{definition}[lemma]{Definition}

\usepackage{listings}

\arraycolsep=1.4pt
\newcommand{\OO}{\mathrm{O}}
\newcommand{\FF}{\mathbb{F}_2}

\newtheorem{ex}{Example}[section]

\newcommand{\Moiio}{\begin{pmatrix}0&1\\1&0\end{pmatrix}}
\newcommand{\Moiii}{\begin{pmatrix}0&1\\1&1\end{pmatrix}}
\newcommand{\Miooo}{\begin{pmatrix}1&0\\0&0\end{pmatrix}}
\newcommand{\Miooi}{\begin{pmatrix}1&0\\0&1\end{pmatrix}}
\newcommand{\Mioio}{\begin{pmatrix}1&0\\1&0\end{pmatrix}}

\newcommand{\Miioo}{\begin{pmatrix}1&1\\0&0\end{pmatrix}}

%
%

\title{A Classification of Transversal Clifford Gates for Qubit Stabilizer Codes}

\author{Shival Dasu${}^1$ and Simon Burton${}^2$ \\
{\small
${}^1$Quantinuum, 303 South Technology Ct., Broomfield, CO 80021, USA \\
${}^2$Quantinuum, Terrington House, 13–15 Hills Road, Cambridge, CB2 1NL, UK
}
}
\date{\today}
\begin{document}

\maketitle
\begin{abstract}
    This work classifies stabilizer codes by the set of diagonal Clifford gates that can be implemented transversally on them. We show that, for any stabilizer code, its group of diagonal transversal Clifford gates on $\ell$ code blocks must be one of six distinct families of matrix groups. We further develop the theory of classifying stabilizer codes by via matrix algebras of endomorphisms first introduced by Rains \cite{Rains1999}, and give a complete classification of the diagonal Clifford symmetries of $\ell$ code blocks. A number of corollaries are given in the final section.
\end{abstract}

\section{Introduction}

Quantum error correction is widely regarded as
essential for the reliable operation of quantum computers.
A variety of quantum error correcting
codes have been developed to protect
quantum information from noise, with new codes continuing
to emerge. Yet, the core challenge in fault-tolerant
quantum computing often lies not in constructing robust
quantum memories, but in enabling logical operations
on encoded qubits in a fault-tolerant manner—ensuring
that errors do not proliferate and degrade the logical
fidelity. As such, methods for characterizing the logical
gate sets supported by a code are critical to advancing
fault-tolerant architectures. 

\emph{Transversal gates} are inherently fault-tolerant because 
they act locally on each code block, while preserving 
the codespace.
We recall some well-known examples.
For CSS codes \cite{Calderbank1996,Steane1996}
we have a transversal CNOT across
$\ell=2$ copies of the code.
Moreover, this operation is transversal \emph{if and only if}
the code is CSS~\cite{Gottesman_1998}.
And so we have characterized the CSS structure of the code 
via the transversal gates supported therein.
Another example comes from $GF(4)$-linear codes \cite{Calderbank1998}:
these support a transversal \emph{facet} gate $SH$ across
a single ($\ell=1$) codeblock.
Once again this implication goes both ways.
Finally, a code is self-dual and CSS if and only if
the entire single qubit Clifford group 
acts as a logical
transversal gate on $\ell=1$ codeblock \cite{Gottesman_1998}.


In each of these cases we find an \emph{operational}
characterization of a class of codes.
The next question to ask is, have we found all instances of
this phenomenon? 
More specifically, what is the group of
\emph{diagonal Clifford gates} on $\ell$ code blocks of a given
stabilizer code (see Fig~\ref{fig:diagonal})?
And then what is the corresponding class of codes?

\begin{figure}
\begin{center}
\includegraphics[]{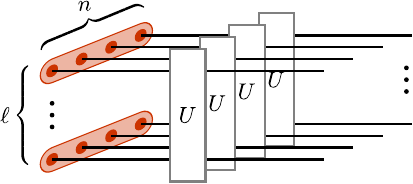}
\end{center}
\caption{
We consider $\ell$ copies of an $n$-qubit code.
For clarity here we choose $n=4$.
A diagonal Clifford operator uses $n$ copies of
the same Clifford gate $U$.
}\label{fig:diagonal}
\end{figure}

In this work we completely solve this problem, 
finding that the group of transversal diagonal $\ell$-qubit Clifford
gates on $\ell$ code blocks of a stabilizer code must
be one of six families of matrix groups, 
together with their corresponding code families.
See Fig~\ref{fig:family}. 


The foundations for these results come from 
early work by Rains \cite{Rains1999},
relating 
the endomorphism algebra $A$ of a code $C$
to operations on $\ell$ code blocks $C^{(\ell)}$.
Rains proves that if a unitary
operator isn't realized by a ring of matrices over $A$ (to be made
precise later), then there exists an ``$A$-linear code''
for which the operator isn't a transversal gate. We strengthen
this by proving that this operator isn't a valid transversal
gate for all $A$-linear codes $C$ 
such that $A$ coincides with 
the algebra of $\Field_2$-linear endomorphisms of $C.$ 
This makes our classification exact.
We also classify up to local Clifford
equivalence all the possible algebras in the qubit case
to find the six families above.
See Fig~\ref{fig:algebras}.

We give a brief outline of the paper. In Section \ref{sec:preliminaries}, we introduce the stabilizer formalism, presented in terms of linear algebra over $\mathbb{F}_2$. In Section \ref{sec:Transversal Algebras}, we define an algebra of transversal endomorphisms of a stabilizer code and give examples. In Section \ref{sec:Classification of Algebras}, we classify the possible endomorphism algebras of any stabilizer code, and, in Section \ref{sec:Transversal Algebra on l blocks}, we show how this determines the endomorphism algebra on $l$ code blocks. In Section \ref{section:Gate Classification}, we prove the main theorem, Theorem \ref{thm:main theorem}, which classifies all stabilizer codes by their group of transversal Clifford gates.

In Section \ref{sec:Ex}, we give a number of applications
of Theorem \ref{thm:main theorem}.
As an example of how our result can advance
fault-tolerant architectures, this includes an application
to magic state distillation and injection. We discuss
how this was applied to a magic state protocol recently
introduced by the authors in \cite{dasu2025}, which was
used to experimentally demonstrate a break-even non-Clifford
gate. This shows how having a classification of the Clifford
gates of a code can even lead to low-overhead protocols
even for non-Clifford gates, thus greatly aiding in the
design of schemes for universal fault-tolerant computation.

\begin{figure}
$$
\begin{tikzcd}[every arrow/.append style={dash},column sep=tiny]
  &       \text{self-dual CSS} &     \\
\GF(4)\arrow[ur]   & \text{semi-self-dual CSS}\arrow[u]\arrow[r,"\cong"] &  \text{self-dual semi-CSS} \arrow[ul]      \\
  &       \text{CSS}\arrow[u] &  \text{self-dual}\arrow[u] \\
\text{(i)}  &       \text{generic}\arrow[uul]\arrow[u]\arrow[ur] &            
\end{tikzcd}
\begin{tikzcd}[every arrow/.append style={dash},column sep=tiny]
  &       \Sp(2\ell,\Field_2) &     \\
\GU(\ell,\Field_2) \arrow[ur]   & \U(\ell,R_8)\arrow[u]\arrow[r,"\cong"] &      \U(\ell,R_8) \arrow[ul]      \\
  &       \GL(\ell,\Field_2)\arrow[u] & \OO(\ell, \Field_2[x]/(x^2))\arrow[u] \\
\text{(ii)}  &       \OO(\ell,\Field_2)\arrow[uul]\arrow[u]\arrow[ur] &       
\end{tikzcd}
$$
\caption{ 
(i) Each family of stabilizer code is characterized by
(ii) a corresponding group of 
diagonal Clifford operators on $\ell$ codeblocks.
The connecting lines show the lattice structure, of
(i) code family inclusions or (ii) subgroups.
}\label{fig:family}
\end{figure}
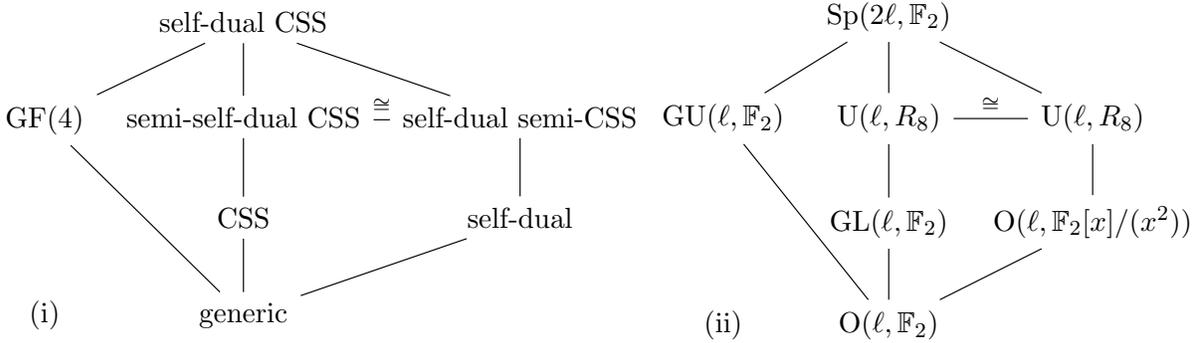

\section{Preliminaries}\label{sec:preliminaries}



\subsection{The Pauli and Clifford Groups}
We will first describe the stabilizer formalism \cite{Gottesman1997}. The Pauli matrices are $2 \times 2$ matrices over the complex numbers defined by 
\begin{equation}
    I = \begin{pmatrix}
        1 & 0 \\
        0 & 1
    \end{pmatrix}, \quad     X = \begin{pmatrix}
        0 & 1 \\
        1 & 0
    \end{pmatrix}, \quad Y = \begin{pmatrix}
        0 & -i \\
        i & 0
    \end{pmatrix}, \quad Z = \begin{pmatrix}
        1 & 0 \\
        0 & -1
    \end{pmatrix}.
\end{equation}
Note that the $X$ and $Z$ matrices anti-commute: $XZ$ = $-ZX$ and $X^2 = Y^2 = Z^2 = I$. The Pauli group on $\ell$ qubits, $\mathcal{P}_\ell$, is generated by all $\ell$-fold tensor products of the above operators.

The Clifford group on $\ell$ qubits is the set of all unitaries which preserve $\mathcal{P}_\ell$ under conjugation. 

\begin{definition}
    The Clifford group on $\ell$ qubits is defined by
    \begin{equation*}
        Cl_\ell = \bigl\{U \in U(2^\ell) | U \mathcal{P}_\ell U^\dagger
            \subseteq \mathcal{P}_\ell\bigr\},
    \end{equation*} where $U(2^\ell)$ is the complex unitary group on $\mathbb{C}^{2^\ell}$. This group is infinite because it is closed under multiplication by unit-modulus scalars. To get a finite group, it is common to define the Clifford group mod scalars 
    \begin{equation}
        \text{Cliff}_\ell := \bigl\{U \in U(2^n) | U \mathcal{P}_\ell
        U^\dagger \subseteq \mathcal{P}_\ell\bigr\}/U(1).
    \end{equation}
\end{definition}

\subsection{The Stabilizer Formalism and Symplectic Linear Algebra}

We can also remove scalars from the Pauli group. Because
$XZ = -iY$, $\mathcal{P}_n$ contains multiples of $i$.
In fact, the diagonal matrices in $\mathcal{P}_n$ are
$i^lI_{2^n}$, $0 \leq l \leq 3$, which is a cyclic group
of order 4, $C_4 = \langle iI_{2^n}\rangle$. If we mod out by these
phases, the Pauli group becomes a 2n-dimensional vector
space over $\mathbb{F}_2 \cong \mathbb{Z}/2\mathbb{Z}$
the finite field with 2 elements \cite{mastel2023cliffordtheorynqubitclifford}.

\begin{proposition}
    Let $\widetilde{\mathcal{P}_n} := \mathcal{P}_n\big/\langle iI_{2^n}\rangle.$
We have that $$\widetilde{\mathcal{P}_n} \cong \mathbb{F}_2^{2n}$$
    where the isomorphism sends a Pauli string
$X_1^{v_1}Z_1^{w_1} X_2^{v_2} Z_2^{w_2}...X_n^{v_n}Z_n^{w_n}$ to the row vector 
$\bigl((v_1, w_1), (v_2, w_2), ... ,(v_n,w_n)\bigr)$.
\end{proposition}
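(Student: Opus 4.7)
The plan is to prove the isomorphism in three stages: first a normal form for Pauli elements, then well-definedness of the stated map, then bijectivity.

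First I would establish that every element of $\mathcal{P}_n$ admits a representation $\alpha \cdot X_1^{v_1} Z_1^{w_1} \cdots X_n^{v_n} Z_n^{w_n}$ with $\alpha \in \langle iI_{2^n}\rangle$ and $v_j, w_j \in \{0,1\}$. By definition $\mathcal{P}_n$ is generated by tensor products of $I, X, Y, Z$; since $Y = iXZ$, each single-qubit factor is a scalar multiple of $X^{v} Z^{w}$ with $v,w \in \{0,1\}$, so every generator is of the required form. To extend this to arbitrary products, I would use the relations $X^2 = Z^2 = I$, $XZ = -ZX$, and the fact that single-qubit operators on different tensor factors commute. These let one commute all $X_j$'s to the left and all $Z_j$'s to the right within each tensor factor, paying only a sign each time an $X$ crosses a $Z$, thus preserving membership in $\langle iI_{2^n}\rangle$.

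Next I would show the map $\Phi\colon \widetilde{\mathcal{P}_n} \to \mathbb{F}_2^{2n}$ defined on representatives by extracting the exponent vector $\bigl((v_1,w_1),\dots,(v_n,w_n)\bigr)$ is a well-defined group homomorphism. Well-definedness on $\widetilde{\mathcal{P}_n}$ requires that two normal forms differing only by a scalar in $\langle iI_{2^n}\rangle$ give the same exponent vector; this follows from the linear independence argument below, since otherwise their ratio would be simultaneously a nontrivial Pauli and a scalar. For homomorphism, I would compute the product of two canonical forms and use the commutation relations to reassemble the result back into canonical form; the accumulated factor is a power of $-1$ (hence in $\langle iI_{2^n}\rangle$) and the resulting exponent vector is the componentwise sum mod $2$ of the input vectors, which is exactly addition in $\mathbb{F}_2^{2n}$.

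Finally I would verify bijectivity. Surjectivity is immediate since every $(v,w) \in \mathbb{F}_2^{2n}$ is realized by the corresponding product of $X_j$'s and $Z_j$'s. For injectivity, suppose $X_1^{v_1}Z_1^{w_1}\cdots X_n^{v_n}Z_n^{w_n} = \alpha I_{2^n}$ for some $\alpha \in \{\pm 1, \pm i\}$; I would argue directly from the $2\times 2$ matrix entries of the tensor factors that each factor $X^{v_j} Z^{w_j}$ must itself be a scalar multiple of $I$, forcing $v_j = w_j = 0$. The main (mild) obstacle is bookkeeping the signs that appear when reducing products to normal form; this is where care is needed to confirm that all such phases genuinely live in $\langle iI_{2^n}\rangle$ so that the exponent-vector map descends cleanly to the quotient.
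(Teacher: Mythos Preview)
Your proof is correct and complete. However, the paper does not actually supply a proof of this proposition: it is stated as a standard structural fact (with a citation to \cite{mastel2023cliffordtheorynqubitclifford} immediately preceding it) and then used freely thereafter. So there is no ``paper's approach'' to compare against; you have filled in a detail the authors chose to omit. Your three-stage argument---normal form via $Y=iXZ$ and the anticommutation relation, verification that the exponent-vector map is a well-defined homomorphism on the quotient, and bijectivity from the linear independence of the $4^n$ tensor Paulis---is the standard way to justify this isomorphism and would be entirely appropriate as a written-out proof.
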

Two Pauli strings in $\mathcal{P}_n$ either commute or anti-commute depending on the number of qubits on which one has an $X$ and the other has a $Z$. For instance $X_1Z_2$ commutes with $Z_1X_2$ because there are two anti-commmuting $X$/$Z$ pairs which give canceling negative signs. We give a simple way of expressing the commutation relation in terms of linear algebra over $\mathbb{F}_2.$ 

Let $M_n(\mathbb{F}_2)$ be the ring of $n \times n$ matrices with entries in $\mathbb{F}_2$, i.e., matrices of $0$s and $1$s. Let $J \in M_2(\mathbb{F}_2)$ be the matrix $J := \begin{pmatrix} 0 & 1 \\ 1 & 0 \end{pmatrix}$. We will also define the matrices $J_n \in M_{2n}(\mathbb{F}_2)$ as block diagonal matrices with $J$s on the diagonal:
\begin{equation}
    J_n := \begin{pmatrix}
    J & \dots & 0 \\
    \vdots & \ddots & \vdots \\
    0 & \dots & J \\
\end{pmatrix}.
\end{equation}
This matrix allows us to define a symplectic inner product on $\mathbb{F}_{2}^{2n}$ which is 1 exactly when the corresponding Pauli strings anti-commute.
\begin{definition}
    Given two row vectors $a,c \in F_2^{2n}$, we define the symplectic inner product $\omega: \mathbb{F}_2^{2n} \times \mathbb{F}_2^{2n} \rightarrow \mathbb{F}_2$ by
    \begin{equation}
        \omega(a,c) := a J_nc^T.
    \end{equation}
\end{definition}

That $\omega$ is an inner product simply means that it is bilinear and non-degenerate; that it is symplectic means that $\omega(v,v) = 0$ for all vectors $v$. 

\begin{proposition}\label{prop: symplectic commutation}
    For any two Pauli strings $X_1^{a_1}Z_1^{b_1}...X_n^{a_n}Z_n^{b_n}$ and $X_1^{c_1}Z_1^{d_1}...X_n^{c_n}Z_n^{d_n}$ in $\mathcal{P}_n$ with exponents given by row vectors 
$a = \bigl((a_1,b_1), ...,(a_n,b_n)\bigr)$ and 
$c = \bigl((c_1,d_1), ...,(c_n,d_n)\bigr)$ respectively, we have
    
     $$(X_1^{a_1}Z_1^{b_1}...X_n^{a_n}Z_n^{b_n}) (X_1^{c_1}Z_1^{d_1}...X_n^{c_n}Z_n^{d_n}) = (-1)^{\omega(a,c)}(X_1^{c_1}Z_1^{d_1}...X_n^{c_n}Z_n^{d_n})(X_1^{a_1}Z_1^{b_1}...X_n^{a_n}Z_n^{b_n}).$$
\end{proposition}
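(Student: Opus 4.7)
The plan is to reduce the $n$-qubit statement to the single-qubit case by exploiting that Pauli operators acting on different qubit factors commute, then derive the single-qubit commutation relation directly from $XZ=-ZX$ and $X^2=Z^2=I$, and finally identify the resulting sign exponent with the bilinear form $\omega$.

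First I would observe that the Pauli string $X_1^{a_1}Z_1^{b_1}\cdots X_n^{a_n}Z_n^{b_n}$ is a tensor product $\bigotimes_{i=1}^n X^{a_i}Z^{b_i}$, and similarly for the $c,d$ string. Since operators on distinct tensor factors commute, swapping the two strings factorizes into $n$ independent single-qubit swaps, and the total sign picked up is the product of the single-qubit signs. So it suffices to prove
\[
(X^a Z^b)(X^c Z^d) = (-1)^{ad+bc}(X^c Z^d)(X^a Z^b)
\]
for $a,b,c,d\in\{0,1\}$, and then multiply across the $n$ qubits.

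To prove the single-qubit identity I would push $X^c$ through $Z^b$ using $XZ=-ZX$, which after $bc$ anticommutations contributes a factor $(-1)^{bc}$, reducing the left-hand side to $(-1)^{bc}X^{a+c}Z^{b+d}$ (using $X^2=Z^2=I$, so exponents live in $\mathbb{F}_2$). Symmetrically, the right-hand side reduces to $(-1)^{ad}X^{a+c}Z^{b+d}$. Comparing gives the sign $(-1)^{ad+bc}$ as claimed. Multiplying the $n$ single-qubit signs yields total sign $(-1)^{\sum_{i=1}^n(a_id_i+b_ic_i)}$.

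The remaining step is to recognize this exponent as $\omega(a,c)$. Using the block-diagonal form of $J_n$ with $2\times 2$ blocks $J$, a direct computation gives
\[
\omega(a,c) = aJ_n c^T = \sum_{i=1}^n (a_i,b_i)J(c_i,d_i)^T = \sum_{i=1}^n (a_id_i+b_ic_i) \pmod 2,
\]
matching the accumulated sign. There is no real obstacle here; the only point requiring care is correctly bookkeeping the anticommutation factors when pushing $X^c$ past $Z^b$ and verifying that the $\mathbb{F}_2$ sum $a_id_i+b_ic_i$ agrees with the integer count of sign flips modulo $2$.
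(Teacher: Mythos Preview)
Your proof is correct. The paper does not actually prove this proposition; it is stated as a standard fact, with the preceding sentence (``Two Pauli strings in $\mathcal{P}_n$ either commute or anti-commute depending on the number of qubits on which one has an $X$ and the other has a $Z$'') giving exactly the intuition you formalize. Your reduction to the single-qubit case via the tensor-product structure, the direct computation of the single-qubit sign from $XZ=-ZX$, and the identification of $\sum_i(a_id_i+b_ic_i)$ with $aJ_nc^T$ are all the standard steps and are carried out cleanly.
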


A stabilizer code can be thought of as a subgroup of $G \subset \mathcal{P}_n$ of size $2^{n-k}$. It is often presented by giving $n-k$ ``independent," commuting Pauli strings,  
$G = \langle s_1,...,s_{n-k}\rangle$.
We can make sense of the independence condition in light of our translation to $\mathbb{F}_2$: the images of the $s_i$ in $\mathbb{F}_2^{2n}$ should be linearly independent vectors, and the image of G is a vector subspace of $\mathbb{F}_2^{2n}$ of dimension $n-k$. The fact that the stabilizers commute is equivalent to any two vectors, $a$ and $c$ in $G$, being orthogonal with respect to $\omega$, i.e., $\omega(a,c) = 0$ by Proposition \ref{prop: symplectic commutation}. Since $\omega$ is a symplectic inner product, we call vector subspaces that are self-orthogonal under $\omega$ symplectic.  If $G$ has dimension $n-k$, the Hilbert space of states fixed by $G$ will have dimension $2^{k}$, i.e., will encode $k$ qubits \cite{Aaronson2004}.  We call this subspace $\mathcal{H}_G$.

\begin{definition}\label{def:stab code}
    An $[[n,k,d]]$ stabilizer code $C$ is an $n-k$ dimensional vector subspace of $\mathbb{F}_2^{2n}$ that is symplectic with respect to $\omega.$ Furthermore, the minimum weight of any codeword in $C^\perp\backslash C$ is $d.$
\end{definition}

For an elaboration of the correspondence between $[[n,k,d]]$ stabilizer codes and $\FF$ vector spaces, see \cite{Calderbank1998}.

We now translate Clifford operators into this language. Given a Clifford operator $U$, by definition, conjugation will send any Pauli string $P \in \mathcal{P}_\ell$ to some other Pauli string $UPU^\dagger$. Viewing these Pauli strings as vectors in $\mathbb{F}_2^{2\ell},$ this induces a map from $\mathbb{F}_2^{2\ell}$ to itself. Since $UP_1 P_2U^\dagger = UP_1U^\dagger UP_2U^\dagger$, this map is linear. Also, since conjugation preserves commutation relations, this linear map is symplectic. In fact, these symplectic maps are in one-to-one correspondence with Clifford operators if we mod out by $\widetilde{\mathcal{P}}_\ell.$

\begin{proposition}\label{prop: Clifford Tableau isomorphism}
    Sending a Clifford operator to the map it induces on $\mathbb{F}_2^{2\ell}$ through conjugation of $\mathcal{P}_\ell$ yields the following isomorphism
    \begin{equation}
        \text{Cliff}_{\ell}/\widetilde{\mathcal{P}_\ell} \cong Sp(2\ell, \mathbb{F}_2).
    \end{equation}
\end{proposition}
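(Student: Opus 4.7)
The plan is to construct a surjective group homomorphism $\phi : \mathrm{Cliff}_\ell \to \mathrm{Sp}(2\ell, \mathbb{F}_2)$ from the conjugation action of Cliffords on Paulis, identify its kernel with $\widetilde{\mathcal{P}}_\ell$, and apply the first isomorphism theorem. I would define $\phi([U])([P]) := [UPU^\dagger]$ for $[U] \in \mathrm{Cliff}_\ell$ and $[P] \in \widetilde{\mathcal{P}}_\ell \cong \mathbb{F}_2^{2\ell}$. Well-definedness follows because a global phase on $U$ cancels in $UPU^\dagger$, and $U\mathcal{P}_\ell U^\dagger \subseteq \mathcal{P}_\ell$ preserves the phase subgroup $\langle iI\rangle$. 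Multiplicativity of conjugation matches vector addition in $\mathbb{F}_2^{2\ell}$ (the group operation on $\widetilde{\mathcal{P}}_\ell$, which is abelian), so $\phi([U])$ is $\mathbb{F}_2$-linear and $\phi$ is a group homomorphism. Conjugation preserves the Pauli commutator $PQ = (-1)^{\omega(p,q)}QP$ of Proposition \ref{prop: symplectic commutation}, so $\omega$ is preserved and the image lies in $\mathrm{Sp}(2\ell, \mathbb{F}_2)$.

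For the kernel, the inclusion $\widetilde{\mathcal{P}}_\ell \subseteq \ker\phi$ is immediate: Paulis conjugate Paulis to themselves up to sign, which is trivial modulo $\langle iI\rangle$. Conversely, suppose $UPU^\dagger = \lambda_P P$ for every $P \in \mathcal{P}_\ell$; Hermiticity of both sides forces $\lambda_P \in \{\pm 1\}$, and $P \mapsto \lambda_P$ is a character of $\mathcal{P}_\ell$ valued in $\{\pm 1\}$. Since $Z_i X_i Z_i = -X_i$ and $X_i Z_i X_i = -Z_i$, every sign pattern on the generators $\{X_i, Z_i\}$ is realized by conjugation by some Pauli $Q \in \mathcal{P}_\ell$, so one can pick $Q$ with $QPQ^{-1} = \lambda_P P$ for all $P$. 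Then $UQ^{-1}$ centralizes $\mathcal{P}_\ell$, and since $\mathcal{P}_\ell$ acts irreducibly on $\mathbb{C}^{2^\ell}$, Schur's lemma forces $UQ^{-1}$ to be scalar, giving $[U] = [Q] \in \widetilde{\mathcal{P}}_\ell$.

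Surjectivity is the main obstacle. I would invoke the classical fact that $\mathrm{Sp}(2\ell, \mathbb{F}_2)$ is generated by its symplectic transvections $T_v(x) = x + \omega(x, v)\,v$. For each nonzero $v$, lift it to a Pauli $P_v \in \mathcal{P}_\ell$ and consider the Clifford element $R_v := (I + iP_v)/\sqrt{2} = \exp(i\pi P_v/4)$; a direct computation shows $R_v P_w R_v^\dagger = P_w$ when $\omega(v, w) = 0$ and $R_v P_w R_v^\dagger = \pm i P_v P_w$ (a Pauli with exponent $v+w$) when $\omega(v, w) = 1$, so $\phi([R_v]) = T_v$. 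Hitting every transvection hits every element of $\mathrm{Sp}$. The two checks requiring care are the transvection-generation theorem for $\mathrm{Sp}$ over $\mathbb{F}_2$ (standard, but characteristic $2$ is special because $\omega(v,v) = 0$ for all $v$) and the short Pauli-algebra verification of the conjugation formula for $R_v$; once both are in hand, $\phi$ is surjective and the first isomorphism theorem finishes the proof.
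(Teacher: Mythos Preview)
Your argument is sound, and in fact the paper does not prove this proposition at all: it explicitly defers to the reference \cite{mastel2023cliffordtheorynqubitclifford} and only records the tableau map. So there is no ``paper's own proof'' to compare against; you have supplied one where the paper has none.

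Two small points worth tightening. First, your appeal to Hermiticity to pin down $\lambda_P\in\{\pm1\}$ is not literally valid for every $P\in\mathcal{P}_\ell$ (e.g.\ $XZ$ is anti-Hermitian), but it \emph{is} valid for the Hermitian generators $X_i,Z_i$, and since $\lambda_{iP}=\lambda_P$ the character descends to $\widetilde{\mathcal{P}}_\ell$ and is determined by those generator values; your Pauli-matching step then goes through unchanged. Second, the transvection-generation of $\mathrm{Sp}(2\ell,\mathbb{F}_2)$ is indeed standard over any field including characteristic~$2$, and your explicit lift $R_v=\exp(i\pi P_v/4)$ together with the short conjugation computation is exactly the right way to realize each $T_v$; this is the cleanest surjectivity argument available and is the one typically given in the reference the paper cites. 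With those two clarifications your proof is complete.
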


We will not prove this proposition (for the proof see \cite{mastel2023cliffordtheorynqubitclifford}) but we will give explicitly the map from $\text{Cliff}_{\ell}$ to $Sp(2\ell, \mathbb{F}_2)$. This map sometimes called the Clifford tableau \cite{Gidney2022}.

\begin{definition}
    For an $\ell$ qubit Clifford $U \in \text{Cliff}_\ell$, we will represent the Clifford tableau of U as the $2\ell \times 2 \ell$ symplectic matrix $T$ with entries $t_{r,s} \in \mathbb{F}_2$, $1 \leq r,s\leq 2\ell$ defined by $UX_iU^\dagger = (-1)^c\prod_{j=1}^{\ell} X_j^{t_{2i-1,2j-1}}Z_j^{t_{2i-1,2j}}$ and $UZ_iU^\dagger = (-1)^d\prod_{j=1}^{\ell} X_j^{t_{2i,2j-1}}Z_j^{t_{2i,2j}}$ for some constants $c,d \in \mathbb{F}_2$.
\end{definition}

The definition above is just saying that the rows of $T$ give us the images of $X_i$ and $Z_i$ under conjugation by $U.$ We ignore the constants $c$ and $d$ because we have only defined stabilizer codes up to sign in Definition $\ref{def:stab code}.$ This is not an issue because if a transversal gate $U \otimes U ... \otimes U$ preserves a code up to sign, the appropriate destabilizers can be added to $U$ to ensure that $P_1 U \otimes P_2U ... \otimes P_\ell U$ preserves the sign of the stabilizers as well, where the $P_i$ are Pauli operators. For more information about destabilizers, see \cite{Aaronson2004}.

Therefore, we define transversal gates in terms of Clifford tableaus. For a code $C \subset \FF^{2n}$, the stabilizers of $\ell$ copies of $C$ will be the vector space $C^{(\ell)} = C\oplus ...\oplus C$, the $\ell$-fold direct sum of $C$. Similarly, if $T$ is the Clifford tableau of $U$, then $T\oplus ... \oplus T$, the $n$-fold direct sum of $T$, will be the Clifford tableau of $U^{\otimes n}$.

\begin{definition}
    Let $C \subset \FF^{2n}$ be a stabilizer code. We define the group of $\ell$-qubit transversal gates on $\ell$ copies of C, $G_C^{\ell},$ to be the $2\ell \times 2\ell$ Clifford Tableaus $T \in Sp(2\ell, \mathbb{F}_2)$ such that $T^{\oplus n}$ sends $C^{(\ell)}$ to itself.
\end{definition}

\section{Stabilizer Codes Invariant Under Algebras}\label{sec:Transversal Algebras}

We will now go one step further and see that not only are codes symplectic vector spaces over $\FF,$ but they are actually symplectic modules over $\FF$-algebras. This idea was first introduced in \cite{Rains1999}.

We can define a ``transversal action'' of any $2 \times 2$ matrix, $T$, 
over $\mathbb{F}_2$ on $v$ in $\mathbb{F}_2^{2n}$ by
letting $T$ act by matrix multiplication on the right
on the consecutive $2 \times 1$ row vectors of $v.$

\begin{definition}\label{def:transversal action}
    For any $2 \times 2$ matrix $T \in M_2(\mathbb{F}_2)$ and $v \in \mathbb{F}_2^{2n}$, we let
    \begin{equation}
        \bigl((v_1,w_1), (v_2,w_2) ... (v_n,w_n)\bigr) \cdot T := 
        \bigl((v_1,w_1) T, (v_2,w_2) T, ... (v_n,w_n)T\bigr)
    \end{equation}
\end{definition}

\begin{definition}\label{def:invariant under matrix}
We say that a code $C$ is invariant under a matrix $T\in M_2(\Field_2)$
if, for any $v \in C$, $v \cdot T \in C.$
\end{definition}

Some of the $2 \times 2$ matrices will correspond to transversal single-qubit unitaries acting on the code, in fact, by Proposition \ref{prop: Clifford Tableau isomorphism} these will be exactly those matrices in $Sp(2, \FF)$ that preserve the code. However, we will see that there are matrices which do not correspond to unitaries which may preserve the code in the sense of Definition \ref{def:invariant under matrix}.

\begin{example}\label{ex:self-dual}
    Recall that the matrix $J$ was defined as $\begin{pmatrix}
        0 & 1 \\
        1 & 0
    \end{pmatrix}$. We let 
    \begin{equation*}
         C = \Bigl\langle 
        \bigl((1,0), (1,0), (1,0), (1,0)\bigr), 
        \bigl((0,1), (0,1), (0,1), (0,1)\bigr) \Bigr\rangle
    \end{equation*}
   In other words, $C$ is the $2$-dimensional subspace of
$\mathbb{F}_2^8$ generated by $\bigl((1,0), (1,0), (1,0), (1,0)\bigr)$
and $\bigl((0,1), (0,1), (0,1), (0,1)\bigr).$ 
In terms of Pauli operators, $C$ is generated by $X_1X_2X_3X_4$
and $Z_1Z_2Z_3Z_4.$
This is the familiar $[[4,2,2]]$ code \cite{Gottesman_1998}. 
   
We will check that $C$ is invariant under $J.$ 
It suffices to check that a basis of $C$ is sent to $C$
   under $J$ since the action of $J$ is linear. We check that
   \begin{align*}
       \bigl((1,0), (1,0), (1,0), (1,0)\bigr) \cdot J &= \bigl((1,0)J, (1,0)J, (1,0)J, (1,0)J) \\
       &= \bigl((0,1), (0,1), (0,1), (0,1)\bigr)
   \end{align*} and similarly $\bigl((0,1), (0,1), (0,1), (0,1)\bigr) \cdot J = \bigl((1,0), (1,0), (1,0), (1,0)\bigr).$
   We can interpret this invariance by viewing $J$ as the Clifford tableau of the Hadamard operator. Then, the above equation is simply saying that $H\tensor H \tensor H \tensor H (X_1 X_2 X_3 X_4) (H\tensor H \tensor H \tensor H)^T = 
  Z_1 Z_2 Z_3 Z_4$ and vice-versa. This tells us that the transversal Hadamard operator preserves the code space.
\end{example}

\begin{lemma}
    Let $C$ be any stabilizer code. Let $A \subseteq M_2(\mathbb{F}_2)$ be the set of matrices under which $C$ is invariant. Then A is an algebra over $\FF$.
\end{lemma}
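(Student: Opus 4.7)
The plan is to verify directly the defining properties of an $\mathbb{F}_2$-algebra: that $A$ is a linear subspace of $M_2(\mathbb{F}_2)$ containing the identity, and that it is closed under matrix multiplication. All of these should follow from two elementary observations about the transversal action of Definition~\ref{def:transversal action}: it is $\mathbb{F}_2$-bilinear in its two arguments, and it is associative in the matrix slot, in the sense that $v \cdot (ST) = (v \cdot S) \cdot T$ for any $v \in \mathbb{F}_2^{2n}$ and $S,T \in M_2(\mathbb{F}_2)$. Both facts are immediate from the block structure of the action, since it just applies $T$ (respectively $ST$, respectively $S$ followed by $T$) by right-multiplication to each of the $n$ consecutive $1\times 2$ row-blocks of $v$.

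With those two observations in hand, the verification is routine. First I would note $0 \in A$ since $v\cdot 0 = 0$ lies in the subspace $C$, and $I \in A$ since $v \cdot I = v \in C$. For additive closure, suppose $S, T \in A$ and $v \in C$; then by $\mathbb{F}_2$-bilinearity of the action,
\begin{equation*}
v \cdot (S+T) = v \cdot S + v \cdot T,
\end{equation*}
and the right-hand side is a sum of two elements of $C$, hence lies in $C$ because $C$ is an $\mathbb{F}_2$-subspace. So $S+T \in A$. Scalar closure over $\mathbb{F}_2$ is automatic since the only scalars are $0$ and $1$. For multiplicative closure, if $S, T \in A$ and $v \in C$, then $v \cdot S \in C$ (since $S \in A$), and applying $T \in A$ to this element of $C$ gives $(v \cdot S) \cdot T \in C$; by associativity this equals $v \cdot (ST)$, so $ST \in A$.

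There is really no obstacle here; the only thing worth being careful about is confirming the bilinearity and associativity of the transversal action, which the paper has essentially baked into its definition. I would therefore state those two facts as a brief preliminary remark (either inline or as a one-line sub-lemma) and then assemble the four closure checks above into the proof.
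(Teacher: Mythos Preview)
Your proposal is correct and follows essentially the same approach as the paper's own proof: the paper likewise checks that $I \in A$ and $0 \in A$, then uses $v\cdot(m_1+m_2)=v\cdot m_1+v\cdot m_2$ together with $C$ being closed under addition for additive closure, and $v\cdot(m_1 m_2)=(v\cdot m_1)\cdot m_2$ for multiplicative closure. Your version is slightly more explicit in isolating the bilinearity and associativity of the transversal action as preliminary remarks, but the content is the same.
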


\begin{proof}
$A$ is an algebra over $\FF$ if it is a vector space over $\mathbb{F}_2$ that is closed under multiplication and contains the identity element.
$C$ trivially invariant under $I\in M_2(\FF),$ so $I \in A.$ Since $C$ contains the 0 vector, $0\in A$ as well.
Suppose $m_1, m_2 \in A$. It remains to show that $m_1+m_2 \in A$ and $m_1 m_2 \in A$. For any $v \in C$, we have that $v\cdot(m_1 + m_2) = v\cdot m_1 + v\cdot m_2$, which is in $C,$ since $C$ is closed under addition. Furthermore, we have that $v \cdot (m_1m_2) = (v\cdot m_1)\cdot m_2$, which is also in C.
\end{proof}

\begin{definition}
    For a code $C,$ we call the algebra $A \subseteq M_2(\mathbb{F}_2)$ of $2 \times 2$ matrices under which $C$ is invariant the endomorphism algebra of $C.$
\end{definition}

This algebra $A$ is more accurately called the 
diagonal endomorphism algebra of $C$,
but we abbreviate this to
the endomorphism algebra of $C$.


\addtocounter{ex}{2}
\begin{ex}[continued]
We now compute the endomorphism algebra of the $[[4,2,2]]$
code $C$ above. Because $C$ is self-dual CSS, A contains the
entire Clifford group on 1 qubit, i.e., the symplectic
group of $2\times 2$ matrices over $\FF$: 
$$\Sp(2,\mathbb{F}_2) := \Bigl\{
\begin{pmatrix} 1 & 0 \\ 0 & 1 \end{pmatrix}, 
\begin{pmatrix} 0 & 1 \\ 1 & 0 \end{pmatrix},  
\begin{pmatrix} 1 & 1 \\ 1 & 0 \end{pmatrix}, 
\begin{pmatrix} 0 & 1 \\ 1 & 1 \end{pmatrix},
\begin{pmatrix} 1 & 1 \\ 0 & 1 \end{pmatrix},
\begin{pmatrix} 1 & 0 \\ 1 & 1 \end{pmatrix}
\Bigr\}$$ 
These 6 elements generate the entire set of $2 \times 2$ matrices under addition so $A = M_2(\mathbb{F}_2)$. This reasoning shows that the endomorphism algebra of any self-dual CSS code is $M_2(\mathbb{F}_2)$.
\end{ex}

In the examples below, we use the notation $\mathbb{F}_{2}[x]$
to mean the algebra generated by $x$ and $1,$ the identity.
We will use this notation for $2$-dimensional algebras,
in which case the algebra 
$\mathbb{F}_2[x] = \{0, 1, x, x+1\}$. If $x$ is a $2 \times 2$ matrix, 
this is equal to 
$\Bigl\{
\begin{pmatrix}
    0 & 0 \\
    0 & 0 
\end{pmatrix}, \begin{pmatrix}
    1 & 0 \\
    0 & 1 
\end{pmatrix}, x,
x + \begin{pmatrix}
    1 & 0 \\
    0 & 1 
\end{pmatrix}\Bigr\} .$

\begin{example}\label{F4 linear}
Let $C$ be the standard stabilizers of the $[[5,1,3]]$ code 
\begin{align*}
    C : &= \Bigl\langle
    \bigl((1,0), (0,1), (0,1), (1,0), (0,0)\bigr), 
    \bigl((0,0), (1,0), (0,1), (0,1), (1,0) ) \\
   &\bigl((1, 0), (0,0), (1,0), (0,1), (0,1)\bigr), 
    \bigl((0,1), (1,0), (0,0), (1,0), (0,1)\bigr) \Bigr\rangle
\end{align*}

Let $A_1 = \mathbb{F}_2\Bigl[\begin{pmatrix} 1 & 1 \\ 1 & 0 \end{pmatrix}\Bigr]$. 
Then $A_1$ is a two-dimensional algebra and 
$A_1 = \Bigl\{0, I,\begin{pmatrix} 1 & 1 \\ 1 & 0 \end{pmatrix}, \begin{pmatrix} 0 & 1 \\ 1 & 1 \end{pmatrix}\Bigr\}.$ 
If we let $x = \begin{pmatrix} 1 & 1 \\ 1 & 0 \end{pmatrix},$ 
then $x^2 = x + 1$, so we can define an algebra isomorphism
between $A_1$ and $F_4 = \{0, 1, \alpha, \alpha+1\}$
by sending $I$ to $1$ and $x$ to $\alpha$ ($\mathbb{F}_4$
is the field with $4$ elements defined by adjoining an
element $\alpha$ to $\mathbb{F}_2$ satisfying $\alpha^2 = \alpha + 1$).
Furthermore, $x$ is the well-known facet gate, so if
$C$ is the $[[5,1,3]]$ code, $C \cdot x \subseteq C$,
since a transversal facet gate preserves the stabilizers of this code.
In fact, one can verify that the endomorphism algebra of
the $[[5,1,3]]$ code is exactly $A_1.$ 
\end{example}

Because $A_1$ is isomorphic to $\mathbb{F}_4$, which sometimes referred to as the Galois field extension of size 4, codes that are invariant under $A_1$ are known in the literature as GF(4)-linear codes.

\begin{definition}
    We call a code that is invariant under the algebra $A_1 \cong \mathbb{F}_4$ a GF(4)-linear code.
\end{definition}

We can also reformulate the CSS condition in terms of being invariant under an algebra.

\begin{example}\label{ex: CSS}
Let $A_2 = \mathbb{F}_2\Bigl[\begin{pmatrix} 0 & 0 \\ 0 & 1 \end{pmatrix}\Bigr]$. 
Then $A_2$ is a two-dimensional algebra consisting of the matrices $0, I,\begin{pmatrix} 0 & 0 \\ 0 & 1 \end{pmatrix},$ and $\begin{pmatrix} 1 & 0 \\ 0 & 0 \end{pmatrix}.$ A code is preserved by this algebra if and only if it is CSS. If $C$ is invariant under $\begin{pmatrix} 1 & 0 \\ 0 & 0 \end{pmatrix}$, then $\bigl((x_1, z_1), (x_2,z_2), ... (x_n,z_n)\bigr) \in C,$ implies that $\bigl((x_1, z_1), (x_2,z_2), ... (x_n,z_n)\bigr) \cdot \begin{pmatrix} 1 & 0 \\ 0 & 0 \end{pmatrix} =\bigl((x_1, 0), (x_2,0), ... (x_n,0)\bigr) \in C.$ Similarly, invariance under $\begin{pmatrix} 0 & 0 \\ 0 & 1 \end{pmatrix}$ implies the Z-component of any stabilizer is in $C.$
\end{example}

There is a final two-dimensional algebra that we will discuss in this section.

\begin{example}\label{ex: self-dual non-CSS}
    Let $A_3 = \mathbb{F}_2[J].$ Then $A_3$ consists of the matrices $0, I,\begin{pmatrix} 0 & 1 \\ 1 & 0 \end{pmatrix},$ and $\begin{pmatrix} 1 & 1 \\ 1 & 1 \end{pmatrix}.$ We see that $\begin{pmatrix} 1 & 1 \\ 1 & 1 \end{pmatrix}^2 = 0$, and, in fact, $A_3 \cong \FF[x]/(x^2)$, the polynomial ring over $\FF$ such that $x^2 = 0.$
The code 
$C = \Bigl\langle\bigl((1,0), (1,0), (0,1), (0,1)\bigr), 
\bigl((0,1), (0,1), (1,0), (1,0)\bigr)\Bigr\rangle$
has endomorphism algebra $A_3$.
\end{example}

\begin{definition}
    A self-dual code $C$ is any stabilizer code that is invariant under $J$.
\end{definition}

Note that a code with endomorphism algebra $A_3$ is a self-dual non-CSS code, in contrast to the standard [[4,2,2]] code discussed in Example \ref{ex:self-dual}, which is self-dual CSS. We believe that the self-dual non-CSS code family is somewhat overlooked in the literature, and we will see that it has its own interesting group of transversal gates.

\section{Classification of Algebras}\label{sec:Classification of Algebras}

\begin{figure}
$$
\begin{tikzcd}[every arrow/.append style={dash},column sep=tiny]
                &   A_0=M_2(\Field_2) &                     \\
\!\!A_1=\Field_2\Bigl[\Moiii\Bigr] \arrow[ur]&
    \!\!\!\!A_4=\Field_2\Bigl[\Miooo,\Miioo\Bigr] \arrow[u]\arrow[r,"\cong"]&
    \Field_2\Bigl[\Moiio,\Mioio\Bigr]\arrow[ul]\\
& A_2=\Field_2\Bigl[\Miooo\Bigr] \arrow[u] &
    A_3=\Field_2\Bigl[\Moiio\Bigr] \arrow[u]           \\
\text{(i)} & A_5=\Field_2\Bigl[\Miooi\Bigr] \arrow[uul] \arrow[u]\arrow[ur] & 
\end{tikzcd}
\ \ \ \ 
\begin{tikzcd}[every arrow/.append style={dash},column sep=tiny]
&                  M_2(\Field_2)                 &                     \\
 \Field_4 \arrow[ur]   & R_8 \arrow[u]\arrow[r,"\cong"]      &      R_8 \arrow[ul]      \\
& \Field_2\times\Field_2 \arrow[u]  &  \Field_2[x]/(x^2) \arrow[u]           \\
\text{(ii)} & \Field_2 \arrow[uul] \arrow[u]\arrow[ur] &               
\end{tikzcd}
$$
\caption{The endomorphism algebras used in the classification theorem
\ref{algbera classification},
as (i) concrete matrix algebras,
or (ii) abstract $\Field_2$-algebras.
The connecting lines show inclusion of algebras.
}
\label{fig:algebras}
\end{figure}
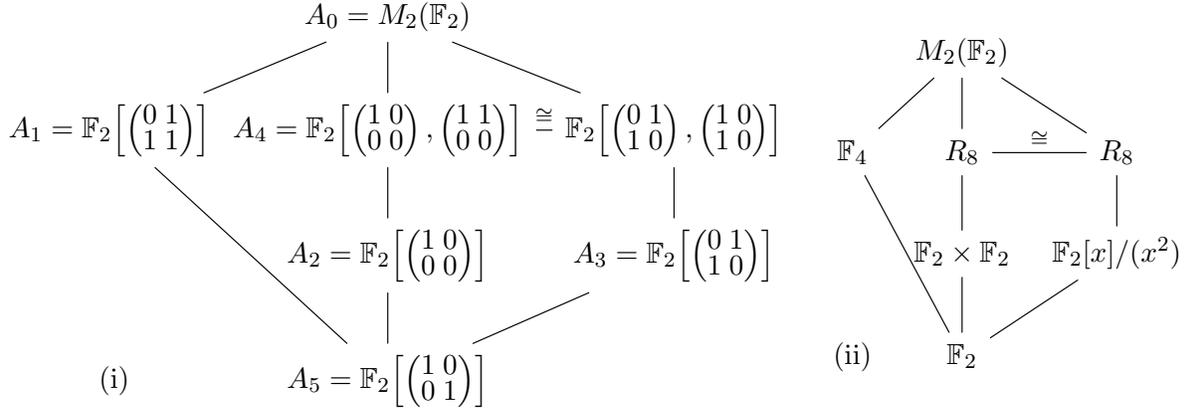

In this section, we classify all the possible endomorphism algebra of a code
can be up to local Clifford equivalence, see Fig~\ref{fig:algebras}.
When we talk
about two codes being locally-Clifford equivalent, we
will restrict our attention only to local-Cliffords of
the form $U\otimes U ...\otimes U$, where each single
qubit Clifford is the same. We call this type of equivalence
a local-diagonal Clifford equivalence (LDC-equivalence).

\begin{theorem}\label{algbera classification}
    
For any code $C \subset \mathbb{F}_2^{2n}$, exactly one
of the following holds:

\noindent (0) The code is LDC-equivalent to a
self-dual CSS code with endomorphism algebra 
$$A_0 = M_2(\mathbb{F}_2).$$ 
\noindent (1) The code is  LDC-equivalent to a
$\GF(4)$-linear code with endomorphism algebra 
$$A_1 = \mathbb{F}_2\Bigl[\begin{pmatrix} 1 & 1 \\ 1 & 0 \end{pmatrix}\Bigr].$$
\noindent (2) The code is LDC-equivalent to a non-self-dual
CSS code with endomorphism algebra 
$$A_2 = \mathbb{F}_2\Bigl[\begin{pmatrix} 1 & 0 \\ 0 & 0 \end{pmatrix}\Bigr].$$
\noindent (3) The code is LDC-equivalent to a self-dual
non-CSS code with endomorphism algebra 
$$A_3 = \mathbb{F}_2\Bigl[\begin{pmatrix} 0 & 1 \\ 1 & 0 \end{pmatrix}\Bigr].$$
\noindent (4) The code is LDC-equivalent to a CSS code
with the X component of the stabilizers contained in
the Z component and with endomorphism algebra $A_4$ the
ring of upper-triangular matrices in $M_2(\mathbb{F}_2)$ 
$$A_4 = \mathbb{F}_2\Bigl[ 
\begin{pmatrix} 1 & 0 \\ 0 & 0 \end{pmatrix},
\begin{pmatrix} 1 & 1 \\ 0 & 0 \end{pmatrix}
\Bigr].$$
\noindent (5) The endomorphism algebra of the code is the trivial one
$$A_5 =  \Bigl\{\begin{pmatrix} 0 & 0 \\ 0 & 0 \end{pmatrix}, 
\begin{pmatrix} 1 & 0 \\ 0 & 1 \end{pmatrix} \Bigr\} \cong \mathbb{F}_2. $$
\end{theorem}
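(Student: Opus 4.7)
The plan is to translate the statement into a classification of unital subalgebras $A \subseteq M_2(\Field_2)$ up to $\Sp(2,\Field_2) = \GL(2,\Field_2)$-conjugation, then exhibit exactly six conjugacy classes. First I would verify that if $U$ is a single-qubit Clifford with tableau $g$, then $C' = U^{\otimes n}C$ has endomorphism algebra $g^{-1}Ag$: a matrix $m$ preserves $C'$ iff $gmg^{-1}$ preserves $C$. Consequently, LDC-equivalence classes of codes correspond bijectively to $\GL(2,\Field_2)$-conjugacy classes of unital subalgebras of $M_2(\Field_2)$, and the theorem reduces to a purely algebraic classification problem.

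Next I would stratify by dimension. The extremes are immediate: $A_5 = \Field_2 \cdot I$ is the unique $1$-dimensional unital subalgebra, and $A_0 = M_2(\Field_2)$ is the unique $4$-dimensional one. For $\dim A = 2$, $A = \Field_2[x]$ for some $x \notin \Field_2 \cdot I$, whose minimal polynomial has degree $2$, so is one of $t^2, t^2{+}1, t^2{+}t, t^2{+}t{+}1$. Since $\Field_2[t]/(t^2{+}1) \cong \Field_2[t]/(t^2)$ via $t \mapsto t{+}1$, there are exactly three isomorphism classes: $\Field_4$, $\Field_2 \times \Field_2$, and $\Field_2[t]/(t^2)$, corresponding to $A_1$, $A_2$, $A_3$.

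The main obstacle is the $\dim A = 3$ case, which requires ruling out exotic structures. By Wedderburn--Malcev (applicable since $\Field_2$ is perfect), $A = S \oplus R$ with $S$ a semisimple subalgebra lifting $A/\mathrm{rad}(A)$ and $R = \mathrm{rad}(A)$; the candidates for $S$ are $\Field_2$, $\Field_2 \times \Field_2$, or $\Field_4$. I would rule out $S = \Field_4$ by observing that $R$ is a left module over $\Field_4 \subseteq A$ (as a two-sided ideal of $A$), forcing $\dim_{\Field_2} R$ to be a multiple of $2$, which contradicts $\dim R = 1$. I would rule out $S = \Field_2$ by direct enumeration: the non-zero nilpotent matrices in $M_2(\Field_2)$ are $\bigl(\begin{smallmatrix} 0 & 1 \\ 0 & 0 \end{smallmatrix}\bigr)$, $\bigl(\begin{smallmatrix} 0 & 0 \\ 1 & 0 \end{smallmatrix}\bigr)$, $\bigl(\begin{smallmatrix} 1 & 1 \\ 1 & 1 \end{smallmatrix}\bigr)$, and their pairwise sums all have non-zero determinant, so they span no $2$-dimensional subspace, making a $2$-dimensional nilpotent radical impossible. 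The only remaining case is $S = \Field_2 \times \Field_2$ with $\dim R = 1$, which in the eigenbasis of the primitive idempotents of $S$ is precisely the upper-triangular algebra $A_4$.

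Finally, within each isomorphism type I would show $\GL(2,\Field_2) = S_3$ acts transitively on embeddings. For $\Field_4 \hookrightarrow M_2(\Field_2)$ this is Skolem--Noether, or equivalently, the only two order-$3$ elements of $\GL(2,\Field_2)$ generate the same subalgebra. For $A_2$, $A_3$, and $A_4$, each embedding is determined by a line in $\Field_2^2$ (for $A_3$ the common image/kernel of the nilpotent generator, for $A_4$ the invariant line, for $A_2$ an unordered complementary pair), and $\GL(2,\Field_2) = S_3$ acts transitively on the three lines of $\Field_2^2$. The six algebras $A_0,\ldots,A_5$ are pairwise non-isomorphic (by dimension, idempotent structure, or presence of a non-zero nilpotent), hence non-conjugate, establishing mutual exclusivity. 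The matching of each algebra type with the stated code description (self-dual CSS, $\GF(4)$-linear, CSS, self-dual non-CSS, etc.) then follows from the examples already worked out in Section~\ref{sec:Transversal Algebras}.
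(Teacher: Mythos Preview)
Your proof is correct and takes a genuinely different route from the paper. The paper proceeds by brute-force enumeration: it lists all seven $2$-dimensional unital subalgebras of $M_2(\Field_2)$ explicitly (calling the extras $B_0,B_1,B_2,B_3$), lists all three $3$-dimensional subalgebras explicitly (upper-triangular, lower-triangular, and the even-parity matrices $E$), and then exhibits specific conjugating elements of $\Sp(2,\Field_2)$ to collapse them into the classes $A_2,A_3$ and $A_4$. Your approach instead classifies abstract isomorphism types first---via minimal polynomials in dimension~$2$ and Wedderburn--Malcev in dimension~$3$---and then argues that each isomorphism type embeds uniquely up to conjugation, using Skolem--Noether for $\Field_4$ and the transitive action of $\GL(2,\Field_2)\cong S_3$ on lines of $\Field_2^2$ for the others. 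Your argument is more structural and would generalize more easily (say to $M_2(\Field_p)$); the paper's is more elementary and entirely self-contained, requiring no non-commutative algebra beyond the definition of a subalgebra.

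One wording issue: the claim that ``LDC-equivalence classes of codes correspond \emph{bijectively} to $\GL(2,\Field_2)$-conjugacy classes of unital subalgebras'' is an overstatement. Distinct codes with the same endomorphism algebra (e.g.\ any two non-isomorphic self-dual CSS codes) are generally not LDC-equivalent, so the map is not injective on LDC-classes. What you actually need---and what your argument uses---is only that the endomorphism-algebra map is well-defined from LDC-classes to conjugacy classes of subalgebras; that alone reduces the theorem to the algebraic classification you then carry out. The logic is unaffected, but the sentence should be rephrased.
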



To prove the classification theorem, we need the following proposition.

\begin{proposition}\label{prop:LC isomorphism}
    A code $C_1$ is LDC-equivalent to a code $C_2$ via $U\tensor U ...\tensor U$ if and only if $C_1\cdot R = C_2,$ where R is the Clifford tableau of U. If $C_1$ has endomorphism algebra $B_1,$ then $C_2$ has endomorphism algebra $R^{-1} B_1 R.$
\end{proposition}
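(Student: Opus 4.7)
The plan is to translate the notion of LDC-equivalence into the $\mathbb{F}_2$-symplectic language via Proposition \ref{prop: Clifford Tableau isomorphism}, and then verify both claims by direct manipulation of the right-action ``$\cdot$'' from Definition \ref{def:transversal action}. The key observation to carry through the whole argument is associativity of this action, i.e.\ $(v\cdot T_1)\cdot T_2 = v\cdot(T_1T_2)$, which follows immediately from the definition since on each length-$2$ block it is literally ordinary right matrix multiplication.

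For the first (``iff'') claim, I would argue as follows. By Proposition \ref{prop: Clifford Tableau isomorphism}, conjugation of a Pauli string $P\in\mathcal{P}_n$ by $U^{\otimes n}$ corresponds, on $\mathbb{F}_2^{2n}$, to the transversal action of the Clifford tableau $R$ of $U$: if $v$ is the symplectic image of $P$, then the symplectic image of $U^{\otimes n}P(U^{\otimes n})^\dagger$ is $v\cdot R$. The transversal gate $U^{\otimes n}$ sends the codespace $\mathcal{H}_{C_1}$ to $\mathcal{H}_{C_2}$ iff it conjugates the stabilizer group of $C_1$ onto that of $C_2$ (up to overall sign, which, as discussed in the paragraph following Proposition \ref{prop: Clifford Tableau isomorphism}, can always be absorbed by destabilizers $P_i$). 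On the $\mathbb{F}_2$-side this says exactly $C_1\cdot R = C_2$, proving both directions at once since $R$ is invertible.

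For the second claim, suppose $m\in B_1$, so $C_1\cdot m\subseteq C_1$. Using associativity and $C_2 = C_1\cdot R$,
\begin{equation*}
    C_2\cdot (R^{-1}mR) = (C_1\cdot R)\cdot(R^{-1}mR) = (C_1\cdot m)\cdot R \subseteq C_1\cdot R = C_2,
\end{equation*}
so $R^{-1}mR$ lies in the endomorphism algebra $B_2$ of $C_2$, giving $R^{-1}B_1R\subseteq B_2$. The reverse inclusion is symmetric: if $m'\in B_2$, then the same computation applied to $C_1 = C_2\cdot R^{-1}$ yields $Rm'R^{-1}\in B_1$, whence $m' = R^{-1}(Rm'R^{-1})R\in R^{-1}B_1 R$.

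The only subtle step is the phase handling in the first claim, where one must be sure that LDC-equivalence of codes (which is defined at the level of codespaces, where overall Pauli signs matter) really does coincide with equality of the $\mathbb{F}_2$-subspaces $C_1\cdot R$ and $C_2$ (which are defined mod sign). This is precisely the destabilizer argument already invoked by the authors earlier in Section \ref{sec:preliminaries}, so once it is cited the rest of the proof is purely formal manipulation of the right-action.
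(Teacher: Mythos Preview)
Your proof is correct and follows essentially the same approach as the paper's own proof: both reduce LDC-equivalence to the transversal action $v\mapsto v\cdot R$ and then read off the conjugated endomorphism algebra. Your version is simply more explicit than the paper's, which asserts without detail that ``the matrices of $M_2(\mathbb{F}_2)$ that preserve $C_2$ are precisely $R^{-1} B_1 R$''; you supply the associativity computation and both inclusions, and you also make the phase-handling point explicit, whereas the paper leaves all of this implicit.
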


\begin{proof}\label{LC equivalence prop}
    The action of $U \otimes U ...\otimes U$ via conjugation sends vectors $v \in \FF^{2n}$ to $v \cdot R$, where $\cdot$ is the transversal action introduced in the preceding section and $R$ is the Clifford tableau of $U$. $C_1$ and $C_2$ are LDC-equivalent, therefore, if and only if $C_1 \cdot R = C_2$. 
    If $C_2 = C_1 \cdot R,$ then the matrices of $M_2(\mathbb{F}_2)$ that preserve $C_2$ are precisely $R^{-1} B_1 R$ where $B_1$ is the endomorphism algebra of $C_1$.
\end{proof}

We can now prove Theorem \ref{algbera classification}.
\begin{proof}

Proposition \ref{prop:LC isomorphism} directly yields that a code, up to LDC-equivalence, falls into at most one of the cases in Theorem \ref{algbera classification} because LDC-equivalences induce isomorphisms of algebras. Therefore, a code cannot be LDC-equivalent to codes in two different cases $(i)$ and $(j)$ since all the $A_i$ and $A_j$ are non-isomorphic. We show now every code falls into one of these cases up to LDC-equivalence.

 Suppose the endomorphism algebra of $C \subset \mathbb{F}_2^{2n}$ is 1-dimensional. This scenario is covered by case (5), since $\mathbb{F}_2$ is the unique 1-dimensional algebra in $M_2(\mathbb{F}_2)$. 

Suppose now that $C$ has an endomorphism algebra which is 2-dimensional.

 One can enumerate all the two dimensional subalgebras
of $M_2(\mathbb{F}_2)$ as follows: 
$A_1 = \FF\Bigl[\begin{pmatrix} 1 & 1 \\ 1 & 0 \end{pmatrix}\Bigr]$,
$A_2 = \mathbb{F}_2\Bigl[\begin{pmatrix} 1 & 0 \\ 0 & 0 \end{pmatrix}\Bigr],$ $A_3 = \FF\Bigl[\begin{pmatrix} 0 & 1 \\ 1 & 0 \end{pmatrix}\Bigr]$,
$B_0 = \mathbb{F}_2\Bigl[\begin{pmatrix} 1 & 1 \\ 0 & 0 \end{pmatrix}\Bigr]$,
$B_1 = \mathbb{F}_2\Bigl[\begin{pmatrix} 0 & 0 \\ 1 & 1 \end{pmatrix}\Bigr]$,
$B_2 = \FF\Bigl[\begin{pmatrix} 1 & 1 \\ 0 & 1 \end{pmatrix}\Bigr]$,
and $B_3 = \FF\Bigl[\begin{pmatrix} 1 & 0 \\ 1 & 1 \end{pmatrix}\Bigr]$.
We can see that this enumeration is complete since each
2-dimensional subalgebra contains two unique elements
apart from 
$\begin{pmatrix} 0 & 0 \\ 0 & 0 \end{pmatrix}$
and 
$\begin{pmatrix} 1 & 0 \\ 0 & 1 \end{pmatrix}$
and the list above contains all $14$ such elements.

 Algebras $A_1,$ $A_2$, and $A_3$ correspond to cases (1), (2), and (3) of the theorem respectively and have been discussed in Examples \ref{F4 linear}, \ref{ex: CSS}, and \ref{ex: self-dual non-CSS}.

The remaining 2-dimensional subalgebras are equivalent to the ones in cases (2) to (4). For instance, we have that $\begin{pmatrix} 1 & 0 \\ 1 & 1 \end{pmatrix}^{-1} B_2 \begin{pmatrix} 1 & 0 \\ 1 & 1 \end{pmatrix} = A_3$, so codes with endomorphism algebra $B_2$ are LDC-equivalent (via a transversal square-root X gate) to a self-dual non-CSS code. Codes with endomorphism algebra $B_3$ are also LDC-equivalent to this case. It can be checked in similar fashion that codes with endomorphism algebras $B_0$ and $B_1$ are LDC-equivalent to non-self-dual CSS codes.

There are three 3-dimensional subalgebras, the ring of upper-triangular matrices, $U= A_4$, lower-triangular matrices, $L$, and matrices with an even number of non-zero entries, $E$. Codes with algebra $A_4$ are LDC-equivalent to codes with algebra $L$ via the Hadamard operation. Similarly, codes with algebra $A_4$ are LDC-equivalent to codes with algebra $E$ via the $\sqrt{X}$ gate. Therefore, all three-dimensional algebras are covered by case (5).

Since $A_4$ contains the element $\begin{pmatrix} 1 & 0 \\ 0 & 0 \end{pmatrix}$, codes with its endomorphisms are CSS. Since it contains $\begin{pmatrix} 0 & 1 \\ 0 & 0 \end{pmatrix}$, the X component of the code must be contained in the Z component--we call codes with this property ``semi-self-dual". Similarly, codes with endomorphism algebra $L$ are CSS codes with Z component contained in the X component. Both of these algebras contain $A_2$. Interestingly, the algebra with even-parity matrices $E$ does not contain $A_2$, so the corresponding codes are not CSS, yet nevertheless have the same group of transversal Cliffords. This is because they have a ``CSS-like" endomorphism, $\begin{pmatrix} 1 & 0 \\ 1 & 0 \end{pmatrix}$, where if we project \textit{both} $X$ and $Z$ onto $X$, then this sends the code to itself. That's why we call such codes semi-CSS in Fig~\ref{fig:family}. Furthermore, $A_3 \subset E$, so such codes are self-dual. All three algebras are isomorphic to $R_8$, the smallest non-commutative ring with 8 elements.

There is only one 4-dimensional subalgebra of $M_2(\FF),$ which is, of course, all of $M_2(\FF).$ In this case, the code is a self-dual CSS code. See Example \ref{ex:self-dual}.
\end{proof}

Since $A_0$ and $A_1$ were unique up to isomorphism as subalgebras of $M_2(\mathbb{F}_2)$, we did not need to find LDC-equivalences in these cases. This yields the following corollary.

\begin{corollary}\label{cor:two unique cases}
    If a code $C$ is LDC-equivalent to a self-dual CSS code, then it is self-dual CSS. If a code is LDC-equivalent to a $GF(4)$-linear code, then it is GF(4)-linear.
\end{corollary}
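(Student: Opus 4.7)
The plan is to reduce both statements to Proposition \ref{prop:LC isomorphism} combined with the uniqueness facts already extracted during the proof of Theorem \ref{algbera classification}. Suppose $C$ is LDC-equivalent to $C'$ via some $U\otimes\cdots\otimes U$, and let $R$ be the Clifford tableau of $U$. I would first invoke Proposition \ref{prop:LC isomorphism} to conclude that the endomorphism algebras satisfy $\End(C) = R^{-1}\,\End(C')\,R$. Since $R\in \GL(2,\Field_2)$, conjugation by $R$ is an inner automorphism of $M_2(\Field_2)$, and in particular it preserves both dimension and abstract isomorphism class of subalgebras.

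For the self-dual CSS case, the endomorphism algebra of $C'$ must be $A_0 = M_2(\Field_2)$ itself (since this is already the whole ambient algebra), and $R^{-1}A_0R = A_0$ trivially. Hence $\End(C) = A_0$ and $C$ is self-dual CSS by Theorem \ref{algbera classification}.

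For the GF(4)-linear case, the inclusion $A_1\subseteq \End(C')$ yields $R^{-1}A_1R\subseteq \End(C)$, where $R^{-1}A_1R$ is a $2$-dimensional subalgebra of $M_2(\Field_2)$ abstractly isomorphic to $A_1\cong \mathbb{F}_4$. The crux is to show that $R^{-1}A_1R = A_1$, which I would extract directly from the enumeration of two-dimensional subalgebras $\{A_1,A_2,A_3,B_0,B_1,B_2,B_3\}$ already carried out in the classification: among these, only $A_1$ is a field, because $A_2,B_0,B_1$ each contain a nontrivial idempotent (so are $\cong \Field_2\times\Field_2$) while $A_3,B_2,B_3$ each contain a nonzero nilpotent (so are $\cong \Field_2[x]/(x^2)$). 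Thus $A_1$ is the unique two-dimensional subfield of $M_2(\Field_2)$, forcing $R^{-1}A_1R = A_1$ and hence $A_1\subseteq \End(C)$, i.e., $C$ is GF(4)-linear.

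I do not expect any serious obstacle: the corollary is essentially a direct reading-off of the classification proof, and the only substantive step is the uniqueness of $A_1$ as the field-like two-dimensional subalgebra of $M_2(\Field_2)$, which is already contained in the finite enumeration above. The only subtlety worth flagging in writing this up is that $\End(C')$ could a priori strictly contain $A_1$ (e.g.\ if $C'$ happens to also be self-dual CSS), but this does not affect the argument, since GF(4)-linearity is defined as the \emph{containment} $A_1\subseteq \End(\cdot)$ rather than equality, and containment is preserved under the conjugation by $R$.
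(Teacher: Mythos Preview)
Your proposal is correct and follows essentially the same approach as the paper: the corollary is stated immediately after the classification theorem with the one-line justification that $A_0$ and $A_1$ are each the unique subalgebra of $M_2(\Field_2)$ in their isomorphism class, so no nontrivial LDC-equivalence was needed in those cases. You have simply unpacked this observation in more detail, including the explicit check that the other two-dimensional subalgebras contain idempotents or nilpotents and hence cannot be conjugate to $A_1$.
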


\begin{figure}
\begin{center}
\begin{tabular}{c|rrrrrrr}
$\ell$ & 1 & 2 & 3 & 4 & 5 & 6 \\
\hline
$S_\ell$ & 1 & 2 & 6 & 24 & 120 & 720 \\
\hline
O($\ell,\Field_2$) & 1 & 2 & 6 & 48 & 720 & 23040 \\
O($\ell, \Field_2[x]/(x^2)$) & 2 & 16 & 384  & 24576 & ... & ...\\
$\U(\ell, R_8)$ & 2 & 48 & 10752 & ...  & ... & ... \\
GL($\ell,\Field_2$) & 1 & 6 & 168 & 20160 & 9999360 & 20158709760 \\
U($\ell,\Field_4$) & 3 & 18 & 648 & 77760 & 41057280 & 82771476480 \\
Sp($2\ell,\Field_2$) & 6 & 720 & 1451520 & 47377612800 & ...  & ...  \\
\end{tabular}
\caption{The entries
in this table show the orders of six families of matrix groups
which are the possible groups of diagonal transversal
Clifford gates on $l$ code blocks of a stabilizer code.
$M(l,R)$ means a group of $l \times l$ matrices with
entries in the ring $R$. Each matrix group satisfies a
common linear algebra condition--O means orthogonal,
U means unitary, GL means general linear, and Sp means symplectic.
$S_\ell$ is the permutation group which acts on the $\ell$ codeblocks,
and is a subgroup of all the others.
}
\label{fig:Family Table}
\end{center}
\end{figure}

\section{The Transversal Algebra on $\ell$ Code Blocks}\label{sec:Transversal Algebra on l blocks}

Let $C^{(\ell)} = C \oplus C ... \oplus C.$ We think
of $C^{(\ell)}$ as being the stabilizers corresponding
to $\ell$ copies of a code $C.$ We are interested in
the set of ``transversal endomorphisms'' of $C^{(\ell)}.$
We will define these to be $2\ell \times 2\ell$ matrices $T$
such that $T$ acting on sets of $\ell$ qubits given by
the $i$th qubit of each code block preserves $C^{(\ell)}$.
In this section, we prove that these multi-block endomorphisms are determined by the endomorphism algebra of $C.$

Formally, an element of $C^{(\ell)}$ is given by 
\begin{equation*}
    \bigl((v_1^{(1)}, w_1^{(1)}), ..., (v_n^{(1)}, w_n^{(1)})\bigr) \oplus ... \oplus \bigl((v_1^{(\ell)}, w_1^{(\ell)}), ..., (v_n^{(\ell)}, w_n^{(\ell)})\bigr)
\end{equation*}
where $\bigl((v_1^{(k)}, w_1^{(k)}), ..., (v_n^{(k)}, w_n^{(k)})\bigr) \in C$ for $1 \leq k \leq \ell.$
We can rewrite the above as a row vector of length $2n \ell$ where we permute the entries so that the $2\ell$ Pauli operators on qubit $i$ of each codeblock are written contiguously: 
\begin{equation*}
    \bigl((v_1^{(1)}, w_1^{(1)}, v_1^{(2)}, w_1^{(2)}, ..., v_1^{(\ell)}, w_1^{(\ell)}), ... , (v_n^{(1)}, w_n^{(1)}, v_n^{(2)}, w_n^{(2)}, ..., v_n^{(\ell)}, w_n^{(\ell)})\bigr).
\end{equation*}

With this indexing, we can now easily describe the transversal action.
\begin{definition}\label{def: transversal action l blocks}
    For a matrix $T \in M_{2\ell}(\FF)$ and a vector $v \in C^{(\ell)},$ we define the transversal action of $T$ on $v$ by
    \begin{align*}
        \bigl((v_1^{(1)}, w_1^{(1)}, &..., 
            v_1^{(\ell)}, w_1^{(\ell)}), ... , 
            (v_n^{(1)}, w_n^{(1)},  ..., v_n^{(\ell)}, 
            w_n^{(\ell)})\bigr) {\cdot} T :=  \\
            &\bigl((v_1^{(1)}, w_1^{(1)}, ..., 
            v_1^{(\ell)}, w_1^{(\ell)}) T, ... , 
            (v_n^{(1)}, w_n^{(1)},  ..., v_n^{(\ell)}, w_n^{(\ell)}) T \bigr).
    \end{align*}
\end{definition}

We will make use of the following lemma. 
\begin{lemma}\label{lem:transversal action}
    For any $2\ell \times 2\ell$ matrix $T$, we can express $T$ as an $\ell \times \ell$ block matrix with $2 \times 2$ entries, $T = (T_{i,j})_{1 \leq i,j \leq \ell}$. Then, for any vector $v^{(1)} \oplus v^{(2)} ...\oplus v^{(\ell)} \in C^{(\ell)}$, we have
    \begin{equation*}
        v^{(1)} \oplus v^{(2)} ...\oplus v^{(\ell)} {\cdot} T =
        \sum_{k=1}^\ell v^{(k)} \cdot T_{k,1}\oplus ... \oplus \sum_{k=1}^\ell v^{(k)} \cdot T_{k,\ell}
    \end{equation*}
    where $v^{(k)} \cdot T_{k,j}$ is the ``transversal action" of $2 \times 2$ matrices in Definition \ref{def:transversal action}.
\end{lemma}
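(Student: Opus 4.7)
The statement is essentially block matrix multiplication translated through the two different indexing conventions introduced above Definition 4.1. My plan is to fix a single qubit position, reduce to a standard block product of a row vector with a block matrix, and then reassemble over all qubit positions.

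First I would fix an index $i$ with $1 \le i \le n$ and isolate the $i$th ``qubit slice'' of $v^{(1)} \oplus \cdots \oplus v^{(\ell)}$, namely the length-$2\ell$ row vector
$$ u_i := (v_i^{(1)}, w_i^{(1)}, v_i^{(2)}, w_i^{(2)}, \ldots, v_i^{(\ell)}, w_i^{(\ell)}). $$
By Definition 4.1 the transversal action of $T$ on $v^{(1)}\oplus\cdots\oplus v^{(\ell)}$ is, slice by slice, ordinary matrix multiplication $u_i \mapsto u_i T$. So it suffices to check the claimed formula at each such slice.

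Next I would apply standard block matrix multiplication. Writing $T = (T_{k,j})_{1\le k,j\le \ell}$ as an $\ell\times\ell$ matrix of $2\times 2$ blocks, and partitioning $u_i$ into its $\ell$ consecutive $2$-entry blocks $(v_i^{(k)}, w_i^{(k)})$, the usual row-times-block-matrix identity gives
$$ u_i T \;=\; \Bigl(\sum_{k=1}^{\ell} (v_i^{(k)},w_i^{(k)})\, T_{k,1},\ \ldots,\ \sum_{k=1}^{\ell} (v_i^{(k)},w_i^{(k)})\, T_{k,\ell}\Bigr). $$
The $j$th $2$-entry block of this row is precisely the $i$th qubit slice of $\sum_{k=1}^\ell v^{(k)}\cdot T_{k,j}$, where $\cdot$ is the single-block transversal action of Definition 3.1.

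Finally I would reassemble. Letting $i$ range over $1,\ldots,n$ and collecting the $j$th $2$-entry block from each $u_i T$ into one codeblock, we recover exactly $\sum_{k=1}^\ell v^{(k)}\cdot T_{k,j}$ in codeblock slot $j$. Re-interleaving back into the original direct-sum presentation then gives
$$ v^{(1)}\oplus\cdots\oplus v^{(\ell)} \cdot T \;=\; \Bigl(\sum_{k=1}^\ell v^{(k)}\cdot T_{k,1}\Bigr) \oplus \cdots \oplus \Bigl(\sum_{k=1}^\ell v^{(k)}\cdot T_{k,\ell}\Bigr), $$
which is the claim.

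The proof is essentially bookkeeping; the only thing to be careful about is the re-indexing between the ``interleaved'' row vector used in Definition 4.1 and the ``direct sum over codeblocks'' form used in the statement. Making that change of indexing explicit once and observing that it commutes with the block decomposition of $T$ is the sole obstacle, and it is purely notational rather than conceptual.
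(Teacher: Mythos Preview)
Your proposal is correct and follows essentially the same approach as the paper: fix a qubit index $i$, apply ordinary block matrix multiplication to the length-$2\ell$ slice $u_i$, and then reshuffle from the interleaved indexing back to the per-codeblock direct sum. The paper's proof is terser but performs exactly these steps in the same order.
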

\begin{proof}
    Using our row vector representation of $v$ so that the $i$th qubit Pauli operators of all $\ell$ blocks are contiguous as before, we have 
    \begin{align*}
        &\bigl((v_1^{(1)}, w_1^{(1)}, ..., v_1^{(\ell)}, w_1^{(\ell)}), ... , (v_n^{(1)}, w_n^{(1)},  ..., v_n^{(\ell)}, w_n^{(\ell)})\bigr) {\cdot} T \\
        &= \Bigl(\bigl( \sum_{k=1}^\ell(v_1^{(k)}, w_1^{(k)}) T_{k,1}, ..., 
            \sum_{k=1}^\ell(v_1^{(k)}, w_1^{(k)}) T_{k,\ell}\bigr), ... , 
            \bigl( \sum_{k=1}^\ell(v_n^{(k)}, w_n^{(k)}) T_{k,1}, ..., 
            \sum_{k=1}^\ell(v_n^{(k)}, w_n^{(k)}) T_{k,\ell}\bigr)\Bigr).
    \end{align*}
    Then, reshuffling terms so that the Pauli operators for each codeblock are written together, we have this equals
    \begin{align*}
        &\bigl( \sum_{k=1}^{\ell} (v_1^{(k)}, w_1^{(k)}) T_{k,1} , ... , 
        \sum_{k=1}^{\ell} (v_n^{(k)}, w_n^{(k)}) T_{k,1}\bigr) 
        \oplus ... \oplus
        \bigl( \sum_{k=1}^{\ell} (v_1^{(k)}, w_1^{(k)}) T_{k,\ell} , ... , 
        \sum_{k=1}^{\ell} (v_n^{(k)}, w_n^{(k)}) T_{k,\ell} \bigr) \\
        &= \sum_{k=1}^{\ell} v^{(k)} \cdot T_{k,1}
        \oplus ... \oplus
        \sum_{k=1}^{\ell} v^{(k)} \cdot T_{k,\ell}.
    \end{align*}
\end{proof}

For a code $C$ with endomorphism algebra $A,$ there is a class of $2\ell \times 2\ell$ matrices $T$ in which we are particularly interested, where the $2 \times 2$ blocks of $T$ are in $A.$ 

\begin{definition}
    For $\ell \geq 1$ and an algebra $A$ of $2 \times 2$ matrices
over $\mathbb{F}_2$, we define $M_\ell(A)$ to be the algebra
of the $2\ell \times 2\ell$ matrices, T, over $\Field_2$ such that, when viewed
as a block matrix with $2 \times 2$ entries, $T = (a_{ij})_{1 \leq i,j \leq \ell}$,
all the $a_{ij}$ are in $A.$ 
\end{definition}

\begin{example}
    Take the algebra 
$A_1 = \FF\Bigl[\begin{pmatrix} 1 & 1 \\ 1 & 0 \end{pmatrix}\Bigr]$. 
Then $M_2(A_1)$ contains the matrix $\begin{pmatrix} 1 & 1 & 1 & 0 \\ 1 & 0 & 0 & 1 \\  1 & 0 & 0 & 1 \\ 0&1&1&1\end{pmatrix}$ because the matrices $\begin{pmatrix} 1 & 1 \\ 1 & 0 \end{pmatrix}, \begin{pmatrix} 0 & 1 \\ 1 & 1 \end{pmatrix},$ and $ \begin{pmatrix} 1 & 0 \\ 0 & 1 \end{pmatrix}$ are in $A_1.$ Identifying the elements of $A_1$ with the finite field with $4$ elements as before, we can view $M_2(A_1)$ as being $M_2(\mathbb{F}_4)$, the $2 \times 2$ matrices over the finite field with 4 elements and rewrite this matrix as $\begin{pmatrix} \alpha & 1 \\ 1 & \alpha +1 \end{pmatrix}.$ 
\end{example}

If $A$ is the endomorphism algebra of $C,$ then, as discussed
earlier, $C$ is an $A$-module. We will now see that $C^{(\ell)}$
is an $M_\ell(A)$ module via the transversal action of
$2\ell \times 2\ell$ matrices. In fact, $M_\ell(A)$ is
precisely the endomorphisms of $C^{(\ell)}$ under
this action, which we now prove.

\begin{theorem}\label{thm:ell block transversal algebra}
    Let $C \subset \mathbb{F}_2^{2n}$ be a code with endomorphism algebra $A.$ The algebra of $2\ell \times 2\ell$ matrices $T$ such that $C^{(\ell)} {\cdot} T \subseteq C^{(\ell)}$ is $M_\ell(A).$
\end{theorem}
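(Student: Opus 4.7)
The plan is to prove the two inclusions $M_\ell(A) \subseteq \{T : C^{(\ell)}\cdot T \subseteq C^{(\ell)}\}$ and its reverse separately. The forward inclusion is essentially bookkeeping on top of Lemma~\ref{lem:transversal action}, while the reverse inclusion is the content of the theorem and requires choosing the right ``test vectors'' to isolate each $2\times 2$ block of $T$.

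For the forward inclusion, I would take an arbitrary $T \in M_\ell(A)$ and an arbitrary $v = v^{(1)} \oplus \cdots \oplus v^{(\ell)} \in C^{(\ell)}$, and apply Lemma~\ref{lem:transversal action} to write
\begin{equation*}
v \cdot T \;=\; \Bigl(\sum_{k=1}^\ell v^{(k)} \cdot T_{k,1}\Bigr) \oplus \cdots \oplus \Bigl(\sum_{k=1}^\ell v^{(k)} \cdot T_{k,\ell}\Bigr).
\end{equation*}
Since every block $T_{k,j}$ lies in $A$ and $A$ is by definition the algebra of $2\times 2$ matrices preserving $C$ under the transversal action of Definition~\ref{def:transversal action}, each term $v^{(k)} \cdot T_{k,j}$ lies in $C$. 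Closure of $C$ under addition (as an $\FF$-subspace) then implies each direct summand lies in $C$, so $v \cdot T \in C^{(\ell)}$.

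For the reverse inclusion, suppose $T$ is a $2\ell \times 2\ell$ matrix with $C^{(\ell)} \cdot T \subseteq C^{(\ell)}$, and write $T = (T_{i,j})_{1 \le i,j \le \ell}$ as an $\ell \times \ell$ block matrix of $2 \times 2$ entries. Fix a pair $(i,j)$ and an arbitrary $v \in C$. The vector $v^{(i)} := 0 \oplus \cdots \oplus v \oplus \cdots \oplus 0$, with $v$ in the $i$-th slot and zeros elsewhere, lies in $C^{(\ell)}$. Applying Lemma~\ref{lem:transversal action} to this vector collapses the sums $\sum_k v^{(k)} \cdot T_{k,s}$ to the single term $v \cdot T_{i,s}$, giving
\begin{equation*}
v^{(i)} \cdot T \;=\; (v \cdot T_{i,1}) \oplus (v \cdot T_{i,2}) \oplus \cdots \oplus (v \cdot T_{i,\ell}).
\end{equation*}
The hypothesis forces this to lie in $C^{(\ell)}$, hence in particular $v \cdot T_{i,j} \in C$. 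Since $v \in C$ was arbitrary, this shows $C \cdot T_{i,j} \subseteq C$, so by definition of the endomorphism algebra $T_{i,j} \in A$. As $(i,j)$ was arbitrary, $T \in M_\ell(A)$.

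The only subtle point is the reverse inclusion, and the trick is simply choosing vectors supported on a single codeblock so that Lemma~\ref{lem:transversal action} isolates one row of blocks of $T$ at a time; the individual blocks $T_{i,j}$ for fixed $i$ are then pinned down coordinatewise by the direct-sum decomposition of $C^{(\ell)}$. Everything else is mechanical, and the algebra structure on $M_\ell(A)$ (as opposed to just a set) is inherited for free from the fact that the set of $T$ preserving $C^{(\ell)}$ is itself closed under sums and products by the same argument used in the lemma characterizing $A$.
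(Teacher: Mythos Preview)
Your proof is correct and follows essentially the same approach as the paper's own proof: both directions use Lemma~\ref{lem:transversal action}, with the forward inclusion relying on $A$-invariance of $C$ and closure under addition, and the reverse inclusion isolating each block $T_{i,j}$ by testing against vectors supported on a single codeblock.
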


\begin{proof}
    We start by showing that if $T \in M_\ell(A)$, then $T$ preserves $C^{(\ell)}.$ By Lemma \ref{lem:transversal action}, we have that for $v^{(1)} \oplus ... \oplus v^{(\ell)} \in C^{(\ell)},$ $v^{(1)} \oplus ... \oplus v^{(\ell)} {\cdot} T = \sum_{k=1}^\ell v^{(k)} \cdot T_{k,1},\oplus ... \oplus \sum_{k=1}^\ell v^{(k)} \cdot T_{k,\ell}$. Since $T \in M_{\ell}(A)$, $T_{k,i} \in A,$ so $\sum_{k=1}^\ell v^{(k)} \cdot T_{k,i} \in C$ for all $i$ since $C$ is invariant under $A$ and $C$ is closed under addition. Therefore $\sum_{k=1}^\ell v^{(k)} \cdot T_{k,1},\oplus ... \oplus \sum_{k=1}^\ell v^{(k)} \cdot T_{k,\ell} \in C^{(\ell)}$.

    We must now show that, conversely, if, for all $v^{(1)} \oplus ... \oplus v^{(\ell)} \in C^{(\ell)}$, $v^{(1)} \oplus ... \oplus v^{(\ell)} {\cdot} T \in C^{(\ell)},$ then $T \in M_\ell(A).$ This is the same as showing that all of the $2 \times 2$ of $T$ blocks are in $A$. To show that that $T_{p,i} \in A,$ we observe that if $\sum_{k=1}^\ell v^{(k)} \cdot T_{k,1},\oplus ... \oplus \sum_{k=1}^\ell v^{(k)} \cdot T_{k,\ell} \in C^{(\ell)},$ then in particular, the ith vector in the direct sum, $\sum_{k=1}^\ell v^{(k)} \cdot T_{k,i} \in C$. Suppose that we are testing the vector in $C^{(\ell)}$ given by $v^{(p)} = v,$ an arbitrary vector in $C$ and $v^{(k)} = 0$ otherwise. Then $\sum_{k=1}^\ell v^{(k)} \cdot T_{k,i} = v \cdot T_{p,i}$. Since this must be in $C$ for all $v \in C$ this means that $T_{p,i}$ preserves $C$ and must therefore be in $A,$ completing the proof.

\end{proof}

\section{Transversal Gates Classification}\label{section:Gate Classification}

We now state and prove our main theorem.
Recall that, for a stabilizer code $C \subseteq \mathbb{F}_2^{2n}$, we define the group of $\ell$-qubit
transversal Clifford gates, $G^\ell_C$, to be the subgroup of $Sp(2\ell,\FF)$ with transversal action (Definition \ref{def: transversal action l blocks}) that preserves $C^{(\ell)}$. Recall that LDC-equivalent means local-diagonal Clifford equivalent.

\begin{theorem}\label{thm:main theorem}
For any stabilizer code, $C$, the group of transversal
Cliffords, $G^\ell_C$, on $\ell$ code-blocks of $C$ must be one of the following:

(0) $\Sp(2\ell,\Field_2)$ and the code is self-dual CSS

(1) $\U(\ell,\Field_4)$ and the code is $\GF(4)$-linear

(2) $\GL(\ell,\Field_2)$ and the code is LDC-equivalent 
to a non-self-dual CSS code

(3) $\OO(\ell, \Field_2[x]/(x^2))$ and the code is 
LDC-equivalent to a self-dual non-CSS code.

(4) $U(\ell, R_8)$ and the code is LDC-equivalent
to a CSS code where any X-type stabilizer, if X is exchanged
with Z, is a Z-type stabilizer for the code

(5) $\OO(\ell,\Field_2)$ otherwise.
\end{theorem}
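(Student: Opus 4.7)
The plan is to combine two results already in the paper: Theorem \ref{algbera classification} classifies the endomorphism algebra of any stabilizer code, up to LDC-equivalence, as one of the six algebras $A_0,\dots,A_5$; and Theorem \ref{thm:ell block transversal algebra} says that whenever $C$ has endomorphism algebra $A$, the full set of $2\ell \times 2\ell$ matrices preserving $C^{(\ell)}$ under the transversal action is exactly $M_\ell(A)$. Since by definition $G^\ell_C$ is the intersection of this set with $\Sp(2\ell,\Field_2)$, the theorem reduces to computing
$$G^\ell_C \;=\; M_\ell(A_i) \,\cap\, \Sp(2\ell,\Field_2)$$
for each $i \in \{0,\dots,5\}$. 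A preliminary observation is that an LDC-equivalence $C \mapsto C\cdot R$ conjugates the transversal group by $I_\ell \otimes R \in \Sp(2\ell,\Field_2)$, so the isomorphism class of $G^\ell_C$ is LDC-invariant, and it is enough to work with the representative algebras $A_i$.

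The extreme cases are immediate. For $A_0 = M_2(\Field_2)$, $M_\ell(A_0) = M_{2\ell}(\Field_2)$ and the intersection is all of $\Sp(2\ell,\Field_2)$. For $A_5 = \Field_2 \cdot I_2$, every $T \in M_\ell(A_5)$ factors as $\tilde T \otimes I_2$ with $\tilde T \in M_\ell(\Field_2)$; since $J_\ell = I_\ell \otimes J$, the symplectic condition $T J_\ell T^T = J_\ell$ collapses to $\tilde T \tilde T^T = I_\ell$, giving $\OO(\ell,\Field_2)$. For case (2), $A_2 \cong \Field_2 \times \Field_2$ via the orthogonal idempotents $\Miooo, \Moooi$; any $T \in M_\ell(A_2)$ splits as $T = X \otimes \Miooo + Z \otimes \Moooi$ with $X,Z \in M_\ell(\Field_2)$, and using $\Miooo J \Miooo = \Moooi J \Moooi = 0$ together with $\Miooo J \Moooi = \Moioo$, a short block computation reduces $T J_\ell T^T = J_\ell$ to the single relation $X Z^T = I_\ell$, giving $\GL(\ell,\Field_2)$ parameterised by $X$.

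The three remaining cases (1), (3), (4) follow a unified template: identify $A_i$ with a ring $R$ equipped with an anti-involution $\sigma$ so that $M_\ell(A_i) = M_\ell(R)$, and translate the $\Field_2$-symplectic condition $T J_\ell T^T = J_\ell$ into the $\sigma$-Hermitian condition on the corresponding $\ell \times \ell$ matrix over $R$. For case (1), $A_1 \cong \Field_4$ via $\Miiio \leftrightarrow \alpha$; $\sigma$ is the Frobenius $\alpha \mapsto \alpha^2 = \alpha + 1$, and one checks that the $\Field_2$-symplectic form is the $\Field_2$-trace of the standard $\sigma$-Hermitian form on $\Field_4^\ell$, yielding $\U(\ell,\Field_4)$. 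For case (3) with $A_3 \cong \Field_2[x]/(x^2)$ (via $I + J \leftrightarrow x$), the only $\Field_2$-algebra automorphism is trivial, and the symplectic condition descends to a symmetric bilinear form over the local ring $\Field_2[x]/(x^2)$, yielding $\OO(\ell,\Field_2[x]/(x^2))$. For case (4) with $A_4 \cong R_8$, the appropriate anti-involution is $m \mapsto J m^T J$, which sends an upper triangular $\begin{pmatrix}a & b \\ 0 & c\end{pmatrix}$ to $\begin{pmatrix}c & b \\ 0 & a\end{pmatrix}$; one then checks that the $\Field_2$-symplectic form matches the Hermitian form for this anti-involution, recovering $\U(\ell, R_8)$.

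The main obstacle is case (4). Because $R_8$ is noncommutative, even defining $\U(\ell, R_8)$ requires pinning down the correct anti-involution, and the translation between the $\Field_2$-symplectic condition and the $R_8$-Hermitian condition must be verified to be an equality of sets, with no spurious elements on either side; this involves expanding $T J_\ell T^T = J_\ell$ block-by-block and matching the resulting constraints against the unitarity relation $T^\dagger T = I_\ell$ in $M_\ell(R_8)$. Once this has been carried out, pairing each algebra with the corresponding code family from Theorem \ref{algbera classification} assembles into the six statements of the theorem.
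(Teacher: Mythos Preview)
Your proposal is correct and takes essentially the same approach as the paper. The only difference is presentational: the paper defines the conjugation $\bar a := J a^t J$ once and shows uniformly that for any $T \in M_\ell(A_i)$ the symplectic condition $T J_\ell T^t = J_\ell$ is equivalent to $T\bar T^t = I$, so all six cases are handled by a single anti-involution (which specialises to the identity on $A_5$, the idempotent swap on $A_2$, the Frobenius on $A_1$, etc.), whereas you treat cases (0), (2), (5) by separate ad hoc arguments before introducing that same anti-involution for cases (1), (3), (4).
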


\begin{proof}

Let $C' \subset \FF^{2n}$ be an arbitrary stabilizer code. By Theorem \ref{algbera classification}, the code is LDC-equivalent to a code $C$ with endomorphism algebra $A_i$ for some $0 \leq i \leq 5$, satisfying the property in case (i) (e.g., is self-dual non-CSS). Furthermore, by Corollary \ref{cor:two unique cases}, if the algebra is $A_0$ or $A_1,$ this LDC-equivalence can be taken to be the identity. This LDC-equivalence induces an isomorphism between the groups of transversal Clifford gates on $C'$ and $C,$ so it remains to characterize the set of transversal gates for codes with endomorphism algebra $A_i.$
    
    An element $T$ of $G_C^\ell$ must preserve the stabilizers
of $C^{(\ell)}$ under the transversal action, since this corresponds to conjugation of the stabilizers by a transversal operator. By Theorem \ref{thm:ell block transversal algebra}, this is true if and only
if $T \in M_\ell(A_i)$. Therefore, $G_C^{\ell} = M_\ell(A_i)\cap Sp(2\ell,\FF),$ where we intersect with $Sp(2\ell,\FF)$ because the operators must actually correspond to a Clifford in addition to preserving the stabilizers. $T \in M_\ell(A_i)$ is symplectic if and only if 

    \begin{equation}\label{eqn:symplectic tableau}
       T J_n T^t = J_n.
    \end{equation}

    We can rephrase this condition as 
    \begin{equation}\label{eq:cond}
    J_nT^tJ_n = T^{-1}.
    \end{equation}
    For $a \in A_i$, let us define a conjugation operation on $A_i$ by
    \begin{equation}\label{eq:conj2by2}
        \bar{a} := Ja^tJ.
    \end{equation}
    It is not obvious a priori that $\bar{a}$ is also in $A_i,$ but the reader can verify that this is the case for all of the $A_i.$ Similarly, for any matrix $T \in M_\ell(A_i)$, if 
    $T = (a_{jk})$ with $a_{jk} \in A_i,$ we can define $\bar{T}$ to be the matrix $(\bar{a}_{jk})$. We can define $\bar{T}^{t}$ by
    \begin{equation}\label{eq:conj transpose}
        \bar{T}^{t}_{jk} := \bar{a}_{kj}
    \end{equation} so we are taking the transpose viewing $\bar{T}$ as a matrix over $A_i.$
    
    Then, for any $T \in M_\ell(A_i)$, since $J_n$ is block diagonal with all diagonal entries $J,$ we have that 
    \begin{equation*}
        J_n T^t J_n = \bar{T}^t.
    \end{equation*}
    Therefore $T \in G_C^\ell$ if and only if $T \in M_\ell(A_i)$ and 
    \begin{equation}\label{eq:physicality condition}
    T \bar{T}^t = I.
    \end{equation} The reader can check that the matrices in $M_\ell(A_i)$ satisfying this unitarity condition are precisely the matrix groups in the theorem. We will work out a few of these explicitly in the next section.
\end{proof}

\section{Examples and Applications}\label{sec:Ex}

We use the notation that for an $\ell \times \ell$ matrix $C$ and a $2 \times 2$ matrix $U$ 
\begin{equation*}
C[U] := \begin{pmatrix} c_{1,1}U & \dots & c_{1,\ell}U \\
    \vdots & \ddots & \vdots \\
    c_{\ell,1}U & \dots & c_{\ell,\ell}U \\
    
\end{pmatrix}
\end{equation*}

\begin{example}
    Take $A_1 = \mathbb{F}_2\Bigl[\begin{pmatrix} 0 & 1 \\ 1 & 1 \end{pmatrix}\Bigr]$ as before. Let $x = \begin{pmatrix} 0 & 1 \\ 1 & 1 \end{pmatrix}.$ Then, since $A_1$ is two-dimensional, any element of $A_1$ can be expressed as $a I + b x,$ so any matrix $T \in M_\ell(A_1)$ can be expressed as $A[I] + B[x],$ where $A$ and $B$ are $\ell \times \ell$ matrices. 
    
    We now compute the action of the conjugation operation in \eqref{eq:conj2by2} on $x$. We have $\bar{x} = J \begin{pmatrix} 0 & 1 \\ 1 & 1 \end{pmatrix} J = \begin{pmatrix} 1 & 1 \\ 1 & 0 \end{pmatrix} = x+ I.$ Identifying $A_1$ with $\mathbb{F}_4$ via $I \rightarrow 1$ and $x \rightarrow \alpha$, this corresponds to the relation $\bar{\alpha} = \alpha + 1$ in $\mathbb{F}_4.$ Under this identification, any $A[I] + B[x]$ in $M_\ell(A_1)$ becomes $$\begin{pmatrix} a_{1,1} + b_{1,1}\alpha & \dots & a_{1,\ell} + b_{1,\ell}\alpha \\
    \vdots & \ddots & \vdots \\
    a_{\ell,1} + b_{\ell,1} \alpha & \dots & a_{\ell,\ell} + b_{\ell,\ell}\alpha \end{pmatrix}$$ if we write $x_{j,k} = a_{j,k} + b_{j,k}$  the symplectic condition \eqref{eq:physicality condition} becomes
    \begin{equation*}
        I = \begin{pmatrix} a_{1,1} + b_{1,1}\alpha & \dots & a_{1,\ell} + b_{1,\ell}\alpha \\
    \vdots & \ddots & \vdots \\
    a_{\ell,1} + b_{\ell,1} \alpha & \dots & a_{\ell,\ell} + b_{\ell,\ell}\alpha \end{pmatrix} \begin{pmatrix} \overline{a_{1,1} + b_{1,1}\alpha} & \dots & \overline{a_{\ell,1} + b_{\ell,1}\alpha} \\
    \vdots & \ddots & \vdots \\
    \overline{a_{1,\ell} + b_{1,\ell} \alpha} & \dots & \overline{a_{\ell,\ell} + b_{\ell,\ell}\alpha} \end{pmatrix}
    \end{equation*}
    where the bar denotes conjugation in $\mathbb{F}_4$.
    We have seen that operations in $M_\ell(A_1)$ are symplectic if and only if they are unitary over $\mathbb{F}_4$. Therefore, the matrix group in this case is $U(\ell,\mathbb{F}_4)$. 
\end{example}

Our theorem yields the following.
    \begin{corollary}
        If $C$ has endomorphism algebra $A_1$, i.e., is any non-self-dual GF(4)-linear code, such as the $[[5,1,3]]$ code, then $G_C$ consists of the identity and both the clockwise and counterclockwise transversal facet gate.
    \end{corollary}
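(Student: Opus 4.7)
The plan is to apply Theorem \ref{thm:main theorem}, case (1), with $\ell=1$, so that $G_C = G_C^1 = \U(1,\Field_4)$, and then unpack this tiny group explicitly. First I would use the identification of $A_1$ with $\Field_4$ established in Example \ref{F4 linear} and in the preceding worked example, under which $I \leftrightarrow 1$ and $x = \begin{pmatrix} 0 & 1 \\ 1 & 1 \end{pmatrix} \leftrightarrow \alpha$, with conjugation given by $\bar{\alpha} = \alpha + 1$.

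Next I would characterize $\U(1,\Field_4)$ directly from the unitarity condition \eqref{eq:physicality condition}: a $1 \times 1$ matrix $(a)$ with $a \in \Field_4$ is unitary precisely when $a \bar{a} = 1$. A quick enumeration of the four elements of $\Field_4$ shows that $0$ fails, while $1 \cdot \bar 1 = 1$, $\alpha \cdot \bar\alpha = \alpha(\alpha+1) = \alpha^2 + \alpha = 1$ (using $\alpha^2 = \alpha+1$), and similarly $(\alpha+1)\overline{(\alpha+1)} = 1$. Hence $\U(1,\Field_4) = \{1,\alpha,\alpha+1\}$ is a cyclic group of order $3$.

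Finally I would translate these three elements back through the isomorphism $A_1 \cong \Field_4$ to $2 \times 2$ symplectic matrices over $\Field_2$: the element $1$ is the identity Clifford tableau, $\alpha$ is the matrix $x$, and $\alpha+1$ is $x + I = x^{2}$, which is the inverse of $x$ in the cyclic group. The tableau $x$ is precisely the facet gate $SH$ (or one of its conjugates, depending on convention), and $x^2$ is the opposite facet gate $(SH)^{-1}$, giving the identity together with both orientations of the transversal facet gate.

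There is no real obstacle here — once Theorem \ref{thm:main theorem} is in hand, the corollary is a direct enumeration in a three-element group. The only point requiring care is verifying that $\alpha$ indeed corresponds to the facet gate rather than to some other order-$3$ Clifford; this is immediate from the identification fixed above, since $x = \begin{pmatrix} 0 & 1 \\ 1 & 1 \end{pmatrix}$ is (up to sign conventions) exactly the Clifford tableau of $SH$.
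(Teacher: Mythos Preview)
Your proposal is correct and follows essentially the same approach as the paper: both arguments reduce $G_C$ to the elements of $A_1$ satisfying the unitarity condition $T\bar{T}=I$, enumerate the four elements, and find that exactly the identity and the two facet-gate tableaus survive while $0$ fails. The only cosmetic difference is that you pass explicitly through the $\Field_4$ identification and compute $a\bar a$ there, whereas the paper checks $T\bar T = I$ directly on the $2\times 2$ matrices.
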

    \begin{proof}
        $G_C$ consists exactly of those operations $T \in M_1(A_1) = A_1$ that satisfy the symplectic condition \eqref{eq:physicality condition}. If $T = I, \begin{pmatrix} 0 & 1 \\ 1 & 1 \end{pmatrix},$ or $\begin{pmatrix} 1 & 1 \\ 1 & 0 \end{pmatrix}$, then it can be verified that $T \bar{T} = I,$ where the latter two matrices are the clockwise and counterclockwise facet gates. Since the $0$ matrix doesn't satisfy the symplectic condition, $G_C$ consists exactly of these three matrices.
    \end{proof}

    We can also furnish a simple proof of Gottesman's claim in \cite{GottesmanStabilizerSymmetries} that there are no non-trivial transversal two-qubit Clifford gates for the $[[5,1,3]]$ code.

    \begin{corollary}\label{cor:no tq gate}
        Let $C$ be a code with endomorphism algebra $A_1$ (such as the $[[5,1,3]]$ code). Then there is no non-trivial transversal two qubit Clifford on $C^{(2)}$. More precisely, all two qubit Clifford gates are swaps followed by transversal facet gates applied to each block separately. 
    \end{corollary}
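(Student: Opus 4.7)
The plan is to apply Theorem \ref{thm:main theorem} to identify $G^2_C$ with $\U(2,\Field_4)$ and then show by direct computation that every element of this order-$18$ group is monomial over $\Field_4$. By case (1) of Theorem \ref{thm:main theorem}, since $C$ has endomorphism algebra $A_1$, we have $G^2_C \cong \U(2,\Field_4)$ (of order $18$, as in Fig.~\ref{fig:Family Table}). Its elements are $M = \begin{pmatrix} a & b \\ c & d \end{pmatrix}$ with entries in $\Field_4$ satisfying $M \bar M^t = I$, where $\bar{\,\cdot\,}$ denotes the Frobenius $a \mapsto a^2$.

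The key computation is the Hermitian norm on $\Field_4$. Since $|\Field_4^\times| = 3$, every nonzero $a$ satisfies $a^3 = 1$, so $a\bar a = a^{1+2} = a^3$ takes only the values $0$ (at $a = 0$) and $1$ (on $\Field_4^\times$). The $(1,1)$ entry of $M\bar M^t = I$ then reads $a\bar a + b\bar b = 1$ in $\Field_2$, which forces \emph{exactly one} of $a,b$ to vanish; the same applies to $c,d$ from the $(2,2)$ entry. The off-diagonal equation $a\bar c + b\bar d = 0$ then pairs these choices so that $M$ is either diagonal (with nonzero entries on the diagonal) or anti-diagonal (with nonzero entries off the diagonal), yielding $3\cdot 3 = 9$ of each and accounting for all $18$ elements.

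Finally I interpret these matrices physically. Under the isomorphism $A_1 \cong \Field_4$ of Example \ref{F4 linear}, the set $\Field_4^\times = \{1,\alpha,\alpha+1\}$ corresponds to the identity together with the two facet gates. A diagonal element $M = \mathrm{diag}(a,d)$ acts as a single-block transversal facet gate (or identity) on each of the two codeblocks independently. Any anti-diagonal element factors as $\begin{pmatrix} 0 & I \\ I & 0 \end{pmatrix} \mathrm{diag}(c,b)$, namely the tableau of a SWAP between the two codeblocks composed with independent transversal facet gates on each block, which is exactly the form asserted by the corollary. The only substantive ingredient is the norm calculation above; the rest is bookkeeping in a group of order $18$, so I do not anticipate a real obstacle.
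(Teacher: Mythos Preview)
Your proof is correct and follows essentially the same route as the paper: both compute that the $\Field_4$-Hermitian norm $a\bar a$ takes only the values $0$ and $1$, use the diagonal entries of $M\bar M^t=I$ to force exactly one of $a,b$ and one of $c,d$ to vanish, and then use the off-diagonal entry to conclude that $M$ is diagonal or anti-diagonal, with the same physical interpretation as facet gates possibly preceded by a swap. The only cosmetic differences are your explicit invocation of the order-$18$ count and the factorization of the anti-diagonal case as $\bigl(\begin{smallmatrix}0&I\\I&0\end{smallmatrix}\bigr)\mathrm{diag}(c,b)$.
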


    \begin{proof}
        We know that $G_C^2$ consists of those matrices of $M_2(\mathbb{F}_4)$ such that 
        \begin{equation*}
            \begin{pmatrix} a & b \\ c & d \end{pmatrix} \begin{pmatrix} \bar{a} & \bar{c} \\ \bar{b} & \bar{d} \end{pmatrix} = \begin{pmatrix} 1 & 0 \\ 0 & 1 \end{pmatrix}.
        \end{equation*}
        For any $x \in \mathbb{F}_4$, $x \bar{x} = 1$ if and only if $x \neq 0$, so we have that the right-hand side of the above equation is equal to 
        \begin{equation*}
            \begin{pmatrix} 1_{a \neq 0} + 1_{b \neq 0} & a \bar{c} + b \bar{d} \\ \bar{a} c + \bar{b} d &  1_{c \neq 0} + 1_{d \neq 0} \end{pmatrix}
        \end{equation*}
        so we must have that exactly one of $a$ and $b$ are non-zero and exactly one of $c$ and $d$ are non-zero for this to be the identity matrix. Suppose that $a$ is non-zero, then $b=0$ implies that $a\bar{c} = 0$, which implies that $c = 0$. In that case, the matrix is of the form
        \begin{equation*}
            \begin{pmatrix} a & 0 \\ 0 & d \end{pmatrix}
        \end{equation*}
        meaning we apply either the identity or a transversal facet gate to the first codeblock, depending on $a$ and do the same to the second codeblock, depending on $d.$ If $b \neq 0,$ then $a = 0$ implying that $b\bar{d} = 0$, implying $d=0$, so the matrix is of the form
        \begin{equation*}
            \begin{pmatrix} 0 & b \\ c & 0 \end{pmatrix}
        \end{equation*}
        corresponding to swapping the two code blocks and possibly performing transversal facet gates depending on $b$ and $c.$
    \end{proof}

\begin{example}
    We now examine the non-self-dual CSS case. That the operations in $M_\ell(A_2)$ which satisfy the symplectic condition \eqref{eq:physicality condition} are indexed by $GL_\ell(\mathbb{F}_2)$ was first observed in \cite{Rains1999}. However, we have now shown that an operation is transversal in this case if and only if it is in $GL_\ell(\FF)$.  Let $x =\begin{pmatrix} 0 & 0 \\ 0 & 1 \end{pmatrix}$. So $A_2 = \mathbb{F}_2[x]$. In this case, we actually wish to write an arbitrary element of $M_\ell(A)$ as $A[x] + B[I+x].$ It can be easily verified that $\bar{x} = x+I$ and $\overline{x +I} = x$. Rewriting equation \eqref{eq:cond} in this case  gives us that 
    \begin{equation}\label{eq:css}
        (A[x] + B[I+x])(A^t[I+x] + B^t [x]) = I.
    \end{equation}
    Since $(I+x)x = 0$, \eqref{eq:css} becomes $AB^t[x] + BA^t[I+x] = I.$ Since $x + (I + x) = I$ and $x, I+x$ are linearly independent, we can see that $AB^t[x] + BA^t[I+x] = I$ is satisfied if and only if $A$ is invertible and $B^t = A^{-1}.$ The set of all such invertible $A$ is $GL_\ell(\mathbb{F}_2).$ We now elaborate how this correspondence yields the CNOT gate. 
    
    Let $A = \begin{pmatrix} 1 & 0 \\ 1 & 1 \end{pmatrix}.$ Then $A$ is invertible as it is its own inverse. Therefore, $B^{t} = A,$ so $B = \begin{pmatrix} 1 & 1 \\ 0 & 1 \end{pmatrix}.$ The corresponding element of $M_\ell(A)$ is therefore
    \begin{equation}
        A[x] + B[I+x] = \begin{pmatrix} I & I+x \\ x & I \end{pmatrix} = \begin{pmatrix} 1 & 0 & 1 & 0 \\ 0 & 1 & 0 & 0 \\ 0 & 0 & 1 & 0 \\ 0 & 1 & 0 & 1 \end{pmatrix}
    \end{equation}
    which is the Clifford Tableau of the CNOT. Recall that the conjugate of $X_1$, for instance, is the image of the row vector $(1,0,0,0)$, which is $(1,0,1,0)$ corresponding to $X_1 X_2$.
\end{example}

\begin{example}
    We now examine the self-dual non-CSS case. Suppose $C$ is a code
with endomorphism algebra $A_3.$ We have that 
$A_3 = \Bigl\{0, I, \begin{pmatrix} 0 & 1 \\ 1 & 0 \end{pmatrix}, \begin{pmatrix} 1 & 1 \\ 1 & 1 \end{pmatrix}\Bigr\}$.
If we let $x = \begin{pmatrix} 1 & 1 \\ 1 & 1 \end{pmatrix},$ we see that $x^2 = 0.$ In fact, $A_3$ is isomorphic to the ring $\mathbb{F}_2[x]/(x^2) = \{0, 1, x, x+1\}$ via the mapping which sends $\begin{pmatrix} 1 & 1 \\ 1 & 1 \end{pmatrix}$ to $x$, $\begin{pmatrix} 0 & 1 \\ 1 & 0 \end{pmatrix}$ to $x+1,$ $I$ to 1 and $0$ to 0.

    We can also compute the conjugation operation in \eqref{eq:conj2by2} on $A_3$, and we find that,for instance, $\bar{x} = Jx^tJ = x.$ In fact, the conjugation operation fixes all the elements of $A_3.$ Therefore, the symplectic condition \eqref{eq:physicality condition} for operations $T \in M_\ell(A_3) = M_\ell(\FF[x]/(x^2))$ becomes that $T^t T = 0,$ in other words, that $T$ is an orthogonal matrix. Therefore, the group of transversal gates in this case is $O(\ell, \FF[x]/(x^2))$, the $\ell \times \ell$ orthogonal matrices over $\FF[x]/(x^2)$.

    We see that $\begin{pmatrix} I & X \\ X & I \end{pmatrix}$ is orthogonal since $X^2 = 0$, so this is a valid non-trivial two qubit gate. Replacing the block matrices $I$ and $X$ with the corresponding $2\times 2$ matrices, the Clifford Tableau of this gate is $\begin{pmatrix} 1 & 0 & 1 & 1 \\ 0 & 1 & 1 & 1 \\ 1 & 1 & 1 & 0 \\ 1 & 1& 0& 1  \end{pmatrix}$. This is the ``$Y$-controlled-$Y$" gate
\begin{center}
\includegraphics{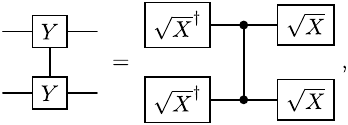}
\end{center}
    which is maximally entangling since it is LDC-equivalent to the $CZ$ gate.
\end{example}

Prior to undertaking this research, we assumed that a code had an entangling transversal two-qubit Clifford gate if and only if it is CSS (i.e. a CNOT). However, the self-dual codes above provide an example of an interesting transversal two qubit gate that exists for codes which are non-CSS. And in fact, our classification theorem tells us that these are the only two cases in which a transversal, entangling two-qubit Clifford exists. We summarize this as a corollary. 

\begin{corollary}
    A code $C$ has a transversal, entangling two qubit Clifford gate if and only if it is LDC-equivalent to a CSS code or a self-dual code (or both).
\end{corollary}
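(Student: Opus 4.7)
The plan is to apply Theorem~\ref{thm:main theorem} directly at $\ell = 2$, which partitions every code into one of six families (0)--(5). Cases (0), (2), (3), (4) correspond exactly to codes LDC-equivalent to a CSS or self-dual code (or both), while (1) and (5) do not. Thus the corollary reduces to checking, family by family, whether $G_C^2$ contains an entangling two-qubit Clifford. Throughout I would use the observation that if $C$ is LDC-equivalent to $C'$ via $U^{\otimes n}$, then transversal two-qubit gates $V^{\otimes n}$ on $C'^{(2)}$ correspond to gates $\bigl((U{\otimes}U)^{\dagger} V (U{\otimes}U)\bigr)^{\otimes n}$ on $C^{(2)}$, and entangling versus non-entangling is preserved under conjugation by the product $U \otimes U$.

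For the \emph{if} direction I would exhibit entangling transversal gates in each of cases (0), (2), (3), (4). In the three CSS cases (endomorphism algebras $A_0$, $A_2$, $A_4$, each containing $\begin{pmatrix}1&0\\0&0\end{pmatrix}$), the CNOT tableau worked out in the non-self-dual CSS example already sits in $M_2(A_i) \cap \Sp(4,\Field_2)$, so transversal CNOT is available and is entangling. In case (3) the $Y$-controlled-$Y$ tableau exhibited in the self-dual non-CSS example lies in $M_2(A_3) \cap \Sp(4,\Field_2)$, and the example itself notes that this gate is LDC-equivalent to $CZ$ and therefore entangling.

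For the \emph{only if} direction I would rule out cases (1) and (5). Case (1) is exactly Corollary~\ref{cor:no tq gate}: every element of $G_C^2$ for a GF(4)-linear code is either block-diagonal (a tensor product of single-qubit facet gates on the two codeblocks) or block-anti-diagonal (a codeblock SWAP composed with facet gates), so it decomposes as single-qubit Cliffords and SWAP and is never entangling. For case (5), noting that $\bar a = a$ for all $a \in A_5 = \Field_2$ reduces the unitarity condition $T \bar T^t = I$ to $M M^t = I$ for $M \in M_2(\Field_2)$ under the scalar embedding $M_2(A_5) \hookrightarrow M_4(\Field_2)$; a direct enumeration gives only $M = I$ and the $2{\times}2$ permutation matrix, which lift to the identity tableau and the codeblock-SWAP tableau, neither of which is entangling.

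The main obstacle, I expect, is pinning down the notion of ``entangling'' (which I would take to mean ``not expressible as a tensor product of single-qubit Cliffords composed with SWAP'') and confirming that this property descends both through LDC-conjugation and through the $M_\ell(A)$ block structure at the tableau level. Once those conventions are fixed, Theorem~\ref{thm:main theorem} makes the corollary a finite, largely mechanical case analysis.
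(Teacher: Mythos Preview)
Your proposal is correct and follows essentially the same approach as the paper: split into the six families, use Corollary~\ref{cor:no tq gate} for the $\GF(4)$-linear case, note that $\OO(2,\Field_2)$ has only the identity and swap for the generic case, and exhibit CNOT or $Y$-controlled-$Y$ in the remaining cases. The only minor difference is that the paper invokes Theorem~\ref{algbera classification} directly for the case split while you route through Theorem~\ref{thm:main theorem}, and you are a bit more explicit about entangling-ness being preserved under LDC-conjugation, which the paper leaves implicit.
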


\begin{proof}
    From Theorem \ref{algbera classification}, we know that a code either has the generic endomorphism algebra $A_5$, or is LDC-equivalent to a code which is GF(4)-linear (endomorphism algebra $A_1$), CSS (algebras $A_2$, $A_4,$ or $A_0$), or self-dual (algebras $A_0$ or $A_3$). In the $GF(4)$-linear case, we know there is no entangling gate by Corollary \ref{cor:no tq gate}. In the generic algebra case, $O(2,\FF)$ only has 2 elements, corresponding to the identity and swap. 
    
    This shows that if a code is not LDC-equivalent to CSS or self-dual, there is no entangling gate. However, if a code is CSS or self-dual, then it has a non-trivial two qubit gate: the CNOT or the $Y$-controlled-$Y$ gate respectively.
\end{proof}

We also discuss the generic case, because, even in this case, Theorem \ref{thm:ell block transversal algebra} yields interesting restrictions on what transversal Cliffords can exist in the absence of endomorphisms.

\begin{example}
    Suppose the endomorphism algebra of $C$ is the trivial one, $A_5.$ Then Theorem \ref{thm:ell block transversal algebra} tells us that all transversal gates must be block matrices with blocks either $I$ or $0.$ For instance, the 4 qubit Clifford gate which is transversal for any code given in \cite{Gottesman1997} can be represented as 
    \begin{equation}
        \begin{pmatrix} I & I & I & 0 \\ 0 & I & I & I \\ I & 0 & I & I \\ I &  I & 0& I  \end{pmatrix}.
    \end{equation}
    Since the conjugation operation on $A_5$ fixes the identity,
the symplectic condition \eqref{eq:physicality condition}
simply becomes that if we replace the $I$ and $0$ matrices
by $1$s and $0$s, then the matrix needs to be orthogonal
over $\FF.$ The representation of the above matrix in
terms of the orthogonal matrix group over $\FF$, and
the observation that if the endomorphism
algebra of $C$ is trivial, it at least has this orthogonal
group of gates was first done in \cite{Rains1999}, and
we are just restating it in terms of our notation here. 
\end{example}
However, our theorem yields, moreover, that a code with trivial endomorphism algebra has at most the orthogonal group over $\FF$, it yields the new result excluding the possibility of 1 to 3 qubit transversal gates for a large class of codes.
\begin{corollary}
    Assume $C$ is a stabilizer code which is not LDC-equivalent to a CSS code, $GF(4)$-linear code, or self-dual code. Then it has no non-trivial one qubit, two qubit, or three qubit transversal Clifford gates.
\end{corollary}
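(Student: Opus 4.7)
The plan is to combine the two classification theorems to reduce the problem to a concrete combinatorial check in dimensions one, two, and three. First I would invoke Theorem \ref{algbera classification}: the hypothesis excludes cases (0), (2), (4) (the CSS possibilities), case (1) ($\GF(4)$-linear), and cases (0), (3) (the self-dual possibilities), so the only remaining case is (5). Hence the endomorphism algebra of $C$ is the trivial one, $A_5 \cong \Field_2$. Applying Theorem \ref{thm:main theorem} immediately yields $G_C^\ell = \OO(\ell, \Field_2)$ for every $\ell$.

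Next I would observe that every element of $\OO(\ell, \Field_2)$, viewed through $M_\ell(A_5)$, is a block matrix whose $2\times 2$ blocks are each either $0$ or $I$. Collapsing each block to its scalar entry in $\Field_2$, such a block matrix corresponds to a genuine Clifford tableau precisely when the underlying $\ell\times\ell$ matrix over $\Field_2$ satisfies $MM^t = I$, i.e.\ is orthogonal over $\Field_2$. When $M$ is a permutation matrix, the resulting transversal operation is merely a relabeling of the $\ell$ code blocks and performs no nontrivial logical operation; the content of the corollary is therefore that, for $\ell \le 3$, every element of $\OO(\ell,\Field_2)$ is a permutation matrix.

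The final step is the explicit enumeration. For $\ell=1$ the only orthogonal matrix is $(1)$. For $\ell=2$, orthogonality forces each row to have odd weight and the two rows to have even overlap; the only odd-weight vectors in $\Field_2^2$ are $(1,0)$ and $(0,1)$, and these must be distinct, leaving only $I$ and the swap. For $\ell=3$ the odd-weight vectors in $\Field_2^3$ are the three standard basis vectors together with $(1,1,1)$; but $(1,1,1)$ has inner product $1$ with every standard basis vector, so it cannot appear as a row of an orthogonal matrix of size~$3$. Hence the rows are a permutation of the standard basis, giving $\OO(3,\Field_2) = S_3$ (matching the order $6$ entry in Fig~\ref{fig:Family Table}).

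There is no real obstacle: the classification theorems do all the heavy lifting, and the remaining content is the elementary linear-algebra observation above. The one point that deserves care is the $\ell=3$ enumeration, which is precisely where the sharp cut-off in the statement comes from: starting at $\ell=4$ the order jumps from $24$ to $48$, reflecting the appearance of genuinely non-permutation orthogonal matrices over $\Field_2$, so the argument cannot be pushed further without additional hypotheses on $C$.
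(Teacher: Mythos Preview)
Your proposal is correct and follows essentially the same approach as the paper: reduce to the generic endomorphism algebra $A_5$ via the classification theorem, identify $G_C^\ell$ with $\OO(\ell,\Field_2)$, and then observe that $\OO(\ell,\Field_2)=S_\ell$ for $\ell\le 3$. The paper's own proof is terser---it simply cites the orders $1,2,6$ from Fig~\ref{fig:Family Table} and notes these coincide with $|S_\ell|$---whereas you give the explicit row-by-row enumeration, but the underlying argument is the same.
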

\begin{proof}
    In this case, the endomorphism algebra of the code $C$ must be the trivial one. If we look at the size of $O_\ell(\FF)$ in our table on the first page, we see that it is of size $1,2$ and $6$ for $\ell = 1, 2,$ and $3$ respectively, implying that the only valid transversal Cliffords are permuting the $\ell$ qubits for $\ell \leq 3.$
\end{proof}

Finally, we discuss the $U(\ell, R_8)$ case and give an application to magic state distillation related to \cite{dasu2025}.

\begin{example}
    Suppose $C$ has the three-dimensional endomorphism algebra $A_4$, the ring of upper-triangular matrices. This ring is isomorphic to $R_8,$ the smallest non-commutative ring with unity. We can see that the conjugation operator in \eqref{eq:conj2by2} sends a matrix $\begin{pmatrix}x & y \\ 0 & z\end{pmatrix}$ to $\begin{pmatrix}z & y \\ 0 & x\end{pmatrix}$. This operation is linear and reverses the order of multiplication in $R_8$: $\overline{a + b} = \bar{a} + \bar{b}$ and $\overline{ab} = \bar{b}\bar{a}$. These  properties guarantee that if $B$ and $C$ are in $M_\ell(A_4),$ $\overline{B C}^t = \bar{C}^t\bar{B}^t$ meaning that the set of $\ell \times \ell$ matrices in $B \in M_\ell(A_4)$ such that $\bar{B}^t \bar{B} = I$ forms a group. We call this group $U(\ell, R_8)$ since this is a unitarity condition on these matrices. We remark that in this case, not only is the CNOT transversal, but also the controlled-Z gate.
\end{example}

The above example is an elucidation of the ``CSS semi-self dual case," with endomorphism algebra $A_4.$ As shown in Theorem \ref{algbera classification}, this case is LDC-equivalent to the ``semi-CSS self-dual" case with endomorphism algebra $E,$ the ring of $2\times2$ matrices with an even number of $1$ entries. We end this section with an application to magic state preparation and injection for such codes, used in \cite{dasu2025}.

\begin{corollary}\label{cor:magic}
    Suppose $C$ is a code with endomorphism algebra $E$, encoding a single logical qubit. Assuming that the logical action of the transversal gate is the same as the physical gate applied, e.g., that $H_L = H^{\otimes n}$, then the following circuit, which measures the logical Hadamard operator and implements a $R_y(\pi/4)$ rotation, can be implemented using only transversal gates which preserve the code space. The ``$D$" gate represents destabilizers of the code, which may need to be applied after applying an $H^{\otimes n}$ gate, which need only preserve stabilizers up to sign.

\begin{center}
    \includegraphics{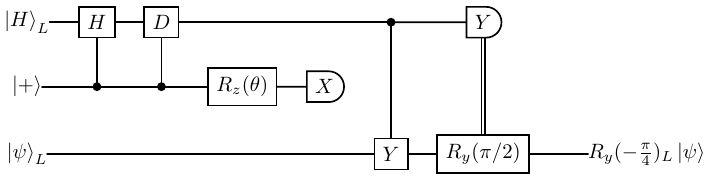}
\end{center}

\end{corollary}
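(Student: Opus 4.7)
The plan is to verify that every gate in the displayed circuit lives in a transversal Clifford group already classified by Theorem \ref{thm:main theorem}, and then appeal to the standard Pauli-teleportation identity (as exploited in the protocol of \cite{dasu2025}) to conclude that assembling these gates in the indicated order measures $H_L$ and applies $R_y(\pi/4)$. So the corollary reduces to auditing each physical gate in the diagram: transversal $H^{\otimes n}$, transversal CNOT, transversal CZ, the destabilizer corrections $D$, and the Pauli measurements.

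First I would handle the single-qubit Hadamard. Because $E \supset A_3 \ni J$, Theorem \ref{thm:ell block transversal algebra} at $\ell = 1$ places $J$ in $M_1(E) \cap \Sp(2,\Field_2)$, so $H^{\otimes n}$ sends the stabilizer subspace $C$ to itself. The signs of individual stabilizers may still flip; but the discussion following Definition 2.7 shows that such sign flips are always compensated by multiplying by an appropriate product of destabilizers, which is exactly the role of the gate labeled $D$ in the circuit. For the two-block gates, Theorem \ref{thm:main theorem} gives $G_C^2 = U(2, R_8)$, and the preceding example explicitly notes that both CNOT and CZ lie in $U(\ell, R_8)$; one verifies this by writing each tableau as a $2 \times 2$ block matrix with entries in $E$ and checking the unitarity condition $T \bar T^t = I$ with respect to the conjugation $a \mapsto JaJ$ restricted to $E$.

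With every constituent gate certified as transversal, the circuit assembles directly: the hypothesis $H_L = H^{\otimes n}$ promotes the transversal Hadamard to the logical Hadamard, and the standard teleportation identity expresses an $H_L$-measurement followed by the $R_y(\pi/4)$ correction in exactly the CNOT--CZ--$H$--measurement pattern shown. The main obstacle is not the algebra above; it is the Pauli-frame bookkeeping needed to show that the destabilizer insertions $D$ can always be chosen to fix the signs of \emph{all} stabilizers simultaneously after each transversal $H$, which is precisely the protocol analysis carried out in \cite{dasu2025} and which one here only needs to cite.
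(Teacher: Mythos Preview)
Your argument has a concrete error: CNOT and CZ are \emph{not} transversal for codes with endomorphism algebra $E$. If you actually carry out the check you propose and write the CNOT tableau in $2\times 2$ blocks, the off-diagonal blocks are $\begin{pmatrix}1&0\\0&0\end{pmatrix}$ and $\begin{pmatrix}0&0\\0&1\end{pmatrix}$, each with an odd number of nonzero entries, so neither lies in $E$ (the even-parity matrices); the same happens for CZ. The preceding example establishes CNOT and CZ only for codes with algebra $A_4$ (upper-triangular matrices). Although $E\cong A_4\cong R_8$ as abstract rings and hence $G_C^2\cong U(2,R_8)$ in both cases, the concrete matrices realising this group inside $\Sp(4,\Field_2)$ differ by conjugation with the LDC equivalence. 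The paper's proof uses exactly this: it observes that conjugating the $A_4$-transversal CNOT by $\sqrt{X}H$ yields a transversal controlled-$Y$ for $E$-codes, and that is the two-qubit gate the circuit actually employs.

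There is a second mismatch. The $\ket{+}$ in the circuit is a single physical ancilla, not a second code block, and the controlled-Hadamard from it to every code qubit is $\ketbra{0}{0}\otimes I^{\otimes n}+\ketbra{1}{1}\otimes H^{\otimes n}$. This is not a $G_C^2$ gate and is not covered by Theorem~\ref{thm:main theorem}; the paper argues separately that it preserves the codespace simply because $H^{\otimes n}$ does (your $J\in E$ step), with the $D$ gate fixing signs. Finally, the conditional correction is $R_y(\pi/2)=HX$, again handled by the one-block transversal Hadamard plus a Pauli. So the ingredients are: transversal $H$, its ancilla-controlled version, and transversal controlled-$Y$ obtained from $A_4$ via the LDC map---not CNOT and CZ on two code blocks.
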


\begin{proof}
    Codes with endomorphism algebra $E$ have a transversal Hadamard, so if the $\ket{+}$ is a physical ancilla, the controlled-Hadamard can be done transversally, in the sense that the same unitary is applied to all qubits of the code, even though the control is a single qubit and not a codeblock. A physical $R_z(\theta)$ may need to be applied to the ancilla before measuring if the $DH$ operator fixes the magic state up to a global phase (see Example \ref{ex:magic}, for instance). These codes also have a controlled-Y gate. To see this, observe that conjugation by $\sqrt{X} H$ will send the transversal CNOT gate enjoyed by codes with endomorphism algebra $A_4$ to a transversal controlled-Y gate. Finally, we observe that $R_y(\frac{\pi}{2}) = HX,$ so this can be done transversally as well.
\end{proof}

\begin{example}\label{ex:magic}
    We now discuss how the above construction yields the magic state preparation and benchmarking protocol that appeared in \cite{dasu2025}, see Fig. 1 in that paper. We describe how the protocol which produces a $\ket{H^{+},Y^{-}}_L$ state encoded in the [[6,2,2]] code is actually using a code with endomorphism algebra $E_1$ on which the circuit in Corollary \ref{cor:magic} can be performed. 

    The [[6,2,2]] code was first introduced as the $C_6$ in \cite{Knill_2005}. However, we will use a later presentation of the stabilizers and logicals, as in \cite{Goto_magic} and \cite{Jones2013}, which was introduced for the purpose of magic state distillation. In this presentation, the stabilizers of the [[6,2,2]] code are generated by $XXXXII, IIXXXX, ZZZZII, IIZZZZ$ and the logical operators are given by $\bar{X}_1 = XIXIXI, \bar{Z}_1 = ZIZIZI, \bar{X}_2 = IXIXIX, \bar{Z_2} = IZIZIZ.$ This is a self-dual CSS code, and, moreover, this representation of the logical operators has the nice property that the transversal Hadamard performs a logical Hadamard on both qubits. 

    In the protocol above, if we view the second logical qubit in the [[6,2,2]] as being gauge-fixed to $\ket{Y^-}_L$, then we can regard the resulting code as a [[6,1,2]] code with stabilizers $XXXXII, IIXXXX, ZZZZII, IIZZZZ, IYIYIY$. This is no longer a CSS code, because of the $Y$ stabilizer but it remains self-dual with an additional ``semi-CSS" endomorphism that sends $X$ to $Y$ and $Z$ to $I.$ Therefore, the endomorphism algebra of this code is $E$. When performing the circuit in Corollary \ref{cor:magic}, the transversal $H$ gate will send the $IYIYIY$ stabilizer to $-IYIYIY,$ so one must apply a controlled-destabilizer of $IYIYIY$ as well, which can be taken to be $IZIZIZ.$ Actually, the application of the transversal Hadamard followed by the $IZIZIZ$ destabilizer, leads to an undesirable phase on the ancilla qubit. More precisely, $\ket{+}\ket{H^+}_L$ is sent to $\ket{0}\ket{H^+}_L + e^{-i\pi/4}\ket{1}\ket{H^+}_L$, so a $T$ gate must be applied to the ancilla before measuring it. If we add a flag to the ancilla, then the circuit is fault-tolerant to a single failure and we obtain the circuit in \cite{dasu2025}. 
\end{example}

\section{Conclusion}

In this work we have completely solved the characterization
of quantum stabilizer code families via their diagonal Clifford symmetries,
or equivalently, their $\Field_2$-linear endomorphism algebras.
This operational approach to characterizing codes
yields a Galois correspondence, 
or a Kleinian geometry~\cite{Klein1893}.

This correspondence yields some surprising consequences. For instance, it seems that the existence of non-CSS codes with a maximally entangling two-qubit gate is not widely known and there may be non-trivial applications for such codes, such as the magic state protocol in Example \ref{ex:magic}.

A question we leave open is the physical interpretation
of the non-invertible $\Field_2$-linear
endomorphisms supported by $C^{(\ell)}$.



{\bf Acknowledgements}
We thank Selwyn Simsek, Elijah Durso-Sabina 
and Natalie Brown for helpful discussion and feedback.

\bibliography{refs}{}
\bibliographystyle{abbrv}

\end{document}